\newtheorem{Remark}{Remark}
\newtheorem{proposition}{Proposition}
\newcommand{\qn}{{\bf n}}
\newcommand{\qs}{{\bf s}}
\newcommand{\qx}{{\bf x}}
\newcommand{\qA}{{\bf A}}
\newcommand{\qD}{{\bf D}}
\newcommand{\qG}{{\bf G}}
\newcommand{\qH}{{\bf H}}
\newcommand{\qI}{{\bf I}}
\newcommand{\qP}{{\bf P}}
\newcommand{\qQ}{{\bf Q}}
\newcommand{\qU}{{\bf U}}
\newcommand{\qV}{{\bf V}}
\newcommand{\qW}{{\bf W}}
\newcommand{\qY}{{\bf Y}}
\DeclareMathOperator*{\argmax}{arg\,max}
\newcommand{\cm}{\mathcal{C}_m}
\newcommand{\cn}{\mathcal{C}_n}
\newcommand{\Ntx}{\bar{N}}
\newcommand{\Nrx}{N}
\newcommand{\Kmax}{K_{\mathrm{max}}}
\newcommand{\Mk}{\mathcal{M}_k}
\newcommand{\Mkp}{\mathcal{M}_{k'}}
\newcommand{\Km}{\mathcal{K}_m}
\newcommand{\Kmp}{\mathcal{K}_{m'}}
\newcommand{\Csm}{\mathcal{C}_m}
\newcommand{\Usm}{\mathcal{U}_m}
\newcommand{\Morder}{M_{\mathrm{order}}}
\newcommand{\etkpmn}{\tilde{\eta}_{k',\cm, \cn}}
\newcommand{\etkmn}{\tilde{\eta}_{k,\cm, \cn}}
\newcommand{\tieta}{\boldsymbol {\tilde{\eta} }}
\newcommand{\Kset}{\boldsymbol{\mathcal {K}}}
\newcommand{\Pb}{P}
\newcommand{\Sn}{\sigma^2}
\newcommand{\Snb}{\sigma^2}
\newcommand{\Pu}{P_\mathrm{u}}
\newcommand{\tauu}{\tau_{\mathrm{u}}}
\title{\fontsize{0.83cm}{1cm}\selectfont   Fronthaul-Aware User-Centric Generalized Cell-Free Massive MIMO Systems}
\author{Zahra Mobini,~\IEEEmembership{Member,~IEEE,} Ahmet Hasim Gokceoglu,  Li Wang,~\IEEEmembership{Member,~IEEE,} Gunnar Peters\\ and Hien Quoc Ngo,~\IEEEmembership{Fellow,~IEEE}}
\begin{document}

\bstctlcite{IEEEexample:BSTcontrol}
\maketitle

\begin{abstract} We consider fronthaul-limited  generalized
zero-forcing-based cell-free massive multiple-input multiple-output  (CF-mMIMO) systems with multiple-antenna users and multiple-antenna access points (APs) relying on both cooperative beamforming (CB) and user-centric  (UC) clustering.
The proposed framework is very general and
can be degenerated into different special cases, such as pure CB/pure UC clustering, or fully centralized CB/fully distributed beamforming.  We comprehensively analyze the spectral efficiency (SE) performance of the system  wherein the users use the minimum mean-squared error-based successive interference cancellation (MMSE-SIC) scheme to detect the desired signals. Specifically, we formulate an  optimization problem for the user
association and power control for maximizing the sum SE. The formulated problem is under per-AP transmit power  and  fronthaul constraints, and is based on only long-term channel state information (CSI). The challenging formulated problem is transformed into tractable form and a novel algorithm  is proposed to solve it using
minorization maximization (MM)  technique.  We  analyze the trade-offs provided by the CF-mMIMO system with different number of CB clusters, hence highlighting the importance of the appropriate choice of CB design for different system setups. Numerical results show that for  the centralized CB, the proposed power optimization provides nearly  $59\%$ improvement in the average sum SE over the heuristic approach, and $312\%$ improvement, when the distributed beamforming is employed. 

\let\thefootnote\relax\footnotetext{
Z. Mobini and H. Q. Ngo   are with the Centre for Wireless Innovation (CWI), Queen's University Belfast, BT3 9DT Belfast, U.K. (email:\{zahra.mobini, hien.ngo\}@qub.ac.uk). H. Q. Ngo is also with the Department of Electronic Engineering, Kyung Hee University, Yongin-si,
Gyeonggi-do 17104, Republic of Korea.
A. gokceoglu1,  L. Wang, and G. Peters are with the  Huawei’s Sweden Research Center, Stockholm, Sweden (e-mail:
\{ahmet.hasim.gokceoglu1, leo.li.wang, gunnar.peters\}@huawei.com). (\textit{Corresponding authors: Hien Quoc Ngo}.)
}
\end{abstract}
\begin{IEEEkeywords}
Cell-free massive multiple-input multiple output (CF-mMIMO), cooperative beamforming, fronthaul, resource allocation, sum spectral efficiency (SE).  
\end{IEEEkeywords}	
\section{Introduction}

Cell-free massive multiple-input multiple-output (CF-mMIMO) is a promising wireless networking platform that can effectively cater to the pervasive connectivity needs and escalating demands for data traffic in the realm of beyond fifth generation (5G). The key idea of CF-mMIMO is to deploy  many geographically distributed access points (APs)  in the
coverage area without dividing the area into disjoint cells. The APs coordinate through multiple central processing units (CPUs) and coherently serve multiple users in the same time-frequency resources~\cite{hien:2017:wcom,Ngo:PROC:2024}. However, in practical implementations, the gains of CF-mMIMO systems might saturate with the number of users and coordinating  APs due to the significant  overhead  required to acquire data, knowledge of the channel state information (CSI), and beamforming matrix over the fronthaul network and also due to the  channel estimation errors. Moreover, serving users by far APs may not be reasonable due to the fact that they consume valuable radio resources but provide users with low improvement in the signal power.  

As a remedy, the concept of dynamic cooperation clusters has been integrated into CF-mMIMO systems, gaining significant attention in recent years, particularly with the emergence of user-centric (UC) CF mMIMO~\cite{Mohammadali:survey:2024,Buzzi:WCOM:2020}. In UC CF-mMIMO  a small set of distributed (single or few antenna) APs jointly serve a subset of users with assumption of high density of antennas with respect to the number of users. UC CF-mMIMO exhibits a remarkable capability for maximizing the efficient utilization of limited power and bandwidth resources, delivering performance that is close to that of the canonical cell-free network in terms of achievable data rate. Moreover, it demands reduced fronthaul/backhaul bandwidth and complexity, rendering it  scalable for practical implementations~\cite{Buzzi:WCOM:2020}.
The benefits of dynamic resource allocation and user scheduling for the UC CF-mMIMO networks were investigated in~\cite{Den:WCOM:2021} where the total downlink achievable rate is maximized under the user's rate requirement.  The two-stage proposed framework in~\cite{Den:WCOM:2021} first partitions users into groups and then groups are scheduled on different frequencies. The corresponding optimization problems were solved by using semi-definite relaxation (SDR) and sequential convex approximation (SCA)-based approach, respectively. In~\cite{Ammar:TWC:2021} user scheduling, power allocation, and beamforming were optimized for  UC CF-mMIMO  networks given the limited number of users to maximize sum rate  for both the coherent and non-coherent transmission modes. The authors in~\cite{Ammar:TWC:2021} resorted to tools from fractional programming, block coordinate descent, and compressive sensing  to construct the beamforming weights and user scheduling algorithms. Total energy efficiency of UC CF-mMIMO with AP selection and power control was investigated in~\cite{Hien:TGCN:2018,Dong:TVT:2019} where it was shown that power allocation algorithm together with the AP selection can notably enhance  the energy efficiency. In~\cite{Hien:TGCN:2018} power control with received-power-based AP selection and channel-quality-based AP selection were studied to decrease the power consumption caused by the fronthaul links. The proposed algorithms in~\cite{Hien:TGCN:2018}  rely on heuristics to specify AP selections.  The authors in~\cite{Dong:TVT:2019} jointly optimized the downlink power control and the active APs (considering ON/OFF activity for the APs) to minimize the total transceiver power consumption under the per-user minimum downlink ergodic spectral efficiency (SE) constraint.  Joint user association and power allocation schemes were proposed in \cite{Jiang:2018:JSAC} for cell-free visible light communication (VLC) networks to address load balancing and power control issues. Meanwhile, different optimization schemes tailored to the varying sizes of CF-mMIMO systems were introduced in \cite{Hao:IOT:2024}.  Learning-based approaches were also employed as an alternative to model-based optimization methods for designing optimal power control~\cite{Sun:2018:TSP}, user association schemes~\cite{Xu:JSASP:2023}, and precoding design~\cite{Lee:GLOBECOM:2020}.

The ongoing research efforts on CF-mMIMO   systems have mainly focused  on non-coordinating beamforming strategies such as  matched filtering (MF)~\cite{hien:2017:wcom},  local zero forcing (ZF)~\cite{Interdonato:TWC:2020, Hao:IOT:2024},  and the combination of centralized ZF  with  maximum-ratio transmission (MRT)~\cite{Hien:JCOM:2021}  to reduce the fronthaul overhead and complexity. However, the performance of CF-mMIMO systems can be considerably improved by using cooperative beamforming (CB) among the APs.
There are two ways to implement  CB 1) centralized  beamforming, and 2) distributed beamforming. In the centralized beamforming schemes  all the APs send local channel estimates  to the CPU through fronthaul links. The CPU computes  beamforming  matrix, and then sends the optimized beamforming to the APs. This requires  huge CSI exchange between the APs and the CPU and more importantly high massive computational complexity for the large number of APs/users regime.  Nevertheless, modern CPUs equipped with many  processors   and hence distributed algorithms for beamforming  could be designed to leverage these multi-core processors.
CF-mMIMO with centralized coordinated ZF-based beamforming designs  has been studied in~\cite{nayebi:2017:wcom, LIU:2020:WCOM}, while centralized minimum mean-squared error-based (MMSE) design has been investigated in~\cite{Emil:WCOM:2020} with different levels of coordination. In contrast, deep learning-based coordinated beamforming design has been proposed in~\cite{Lee:GLOBECOM:2020} for traditional cellular systems, focusing on multicell downlink scenarios with rate-limited CSI exchange between base stations (BSs).

It is imperative to note that research on CF-mMIMO with CB and/or UC clustering is still in its infancy, with numerous key challenges yet to be addressed~\cite{Mohammadi:2024:TCOM}, such as following:
\emph{i)} From a performance perspective, fully centralized CB is optimal but impractical for many real-world scenarios due to its scalability issues, requiring extensive cooperation and a costly fronthaul/backhaul network. Conversely, fully distributed (non-coordinating) beamforming, where the beamforming design is handled locally at the APs, is scalable, and has low implementation demands but yields mediocre performance.
Therefore,  studying the  trade-offs between the performance, scalability, and degree of cooperation among the APs in general CF-mMIMO systems with UC clustering for different application scenarios  is meaningful and important.
\emph{ii)} Furthermore,   most studies tend to investigate simple system setup where the users are equipped with single antenna. However, in practice,  users  can be equipped with
multiple antennas to improve system reliability due to the higher diversity gain~\cite{Mai:2020:TCOM}.  The extension of the CF-mMIMO to the case with multiple-antenna users is not trivial, especially for the CF-mMIMO with no downlink channel estimation in which channel-dependent combining schemes cannot be used at the user sides. 
\emph{iii)} In addition, most of the above mentioned studies   consider the case in which the APs are managed by one or several CPUs to which they are connected through an infinite fronthaul/backhaul capacity links.  However, the assumption of infinite fronthaul is not realistic in practice. In fact, the limited fronthaul capacity  constitute one of the most key challenges in practical CF-mMIMO systems.
The performance of  a  CF-mMIMO system with single-antenna users and non-coordinating beamforming design  in the uplink  and downlink,  taking account the fronthaul constraint, has been investigated in~\cite{Bashar:JCOM:2021,Guenach:JCOM:2021}, and~\cite{Xia:TWC:2021}, respectively.
Coordinated beamforming design  
for a reconfigurable intelligent surface (RIS)-aided CF-mMIMO system with single-antenna users under CSI
uncertainties at the transmitter and the capacity-limited fronthaul links was investigated in~\cite{Yao:TCOM:2023}.
Moreover, distributed   resource allocation algorithms for user scheduling and beamforming in a UC CF-mMIMO system were studied in~\cite{Ammar:TWC:2022}. However,  the authors in~\cite{Ammar:TWC:2022} assumed single-antenna users and infinite fronthaul capacity links. Additionally, resource allocation in \cite{Ammar:TWC:2022} relied on instantaneous CSI rather than statistical CSI. Therefore, all designs must be quickly recomputed whenever the small-scale fading coefficients change. Moreover, they necessitate instantaneous CSI knowledge at the users, leading to significant  overhead in systems with many users.  We would like to highlight that~\cite{Mai:2020:TCOM} pursued a performance analysis of CF-mMIMO systems with multi-antenna users, while the  CB and UC  designs were ignored, and the impact of limited fronthaul capacity links was not investigated either. To the best of the authors’ knowledge, the performance of fronthaul-limited CF-mMIMO, which relies on both UC and CB alongside multiple-antenna users, has not been studied. Motivated by filling the above-mentioned knowledge gaps in the literature, we consider a   fronthaul-limited UC generalized CF-mMIMO systems with multiple antennas at both APs and users, where  the available APs are divided into multiple disjoint CB clusters. APs in each CB cluster perform ZF-based CB and send data symbols to a subset (not all) of the users in the system. Then, the SE of the system is investigated comprehensively. 

The key contributions of this paper are summarized as follows:
\begin{itemize}
\item We provide an analytical framework for a fronthaul limited    CF-mMIMO system with multiple-antenna users/APs and the notion of joint CB and UC clustering wherein the users use the minimum mean-squared error-based successive interference cancellation (MMSE-SIC) scheme to detect the desired signals.  The proposed CF-mMIMO framework is very general and can cover different CF-mMIMO scenarios. In particular, it can be degenerated to different special cases such as conventional CF-mMIMO with pure CB,  pure UC clustering,   fully centralized beamforming, and   fully distributed beamforming. 
\item We formulate a novel optimization problem for the user association and power control of the  generalized CF-mMIMO system  for maximizing the  sum SE. The formulated problem is under per-AP transmit power    and fronthaul constraints.
We solve the sum-SE maximization problem by casting the original problem into two sub-problems, namely P1) power optimization given user-association  and P2) user association given power optimization.

\item We show that the power optimization in P1 is non-convex and thus we propose a novel   algorithm  to solve the challenging formulated   problem. In particular, we resort to sequential optimization framework, also known as minorization maximization (MM) approach. By using this iterative framework, we  reformulate the original non-convex power optimization problem into a series of auxiliary convex subproblems that can be easily solved by algorithms that have  low
computational complexity and their solutions converge to a locally optimal solution of the original problem. Following P1, we propose a scheme based on channel gain criterion to address the user association problem in P2.

 \item Numerical results confirm that our joint optimization
approach significantly outperforms the heuristic approaches. The simulation results also show the different trade-offs between the performance  and degree of cooperation among the APs in general CF-mMIMO systems. 
 More precisely, the proposed power allocation scheme yields  increasing benefits as the size of CB clusters (number of coordinating APs) decreases, i.e., when interference cancellation via beamforming is limited. Furthermore, as the beamforming capability of APs improves (with more antennas per AP), the proposed power allocation significantly enhances fully distributed beamforming performance, reducing the need for beamforming coordination among APs. In addition, our results indicate that under practical fronthaul limitations, the proposed power allocation combined with simple distributed CB architecture can even outperform a computationally-heavy fully centralized CB without power control.
\end{itemize}

\begin{figure}[t]
	\centering
	\vspace{0em}
	\includegraphics[width=65mm, height=60mm]{  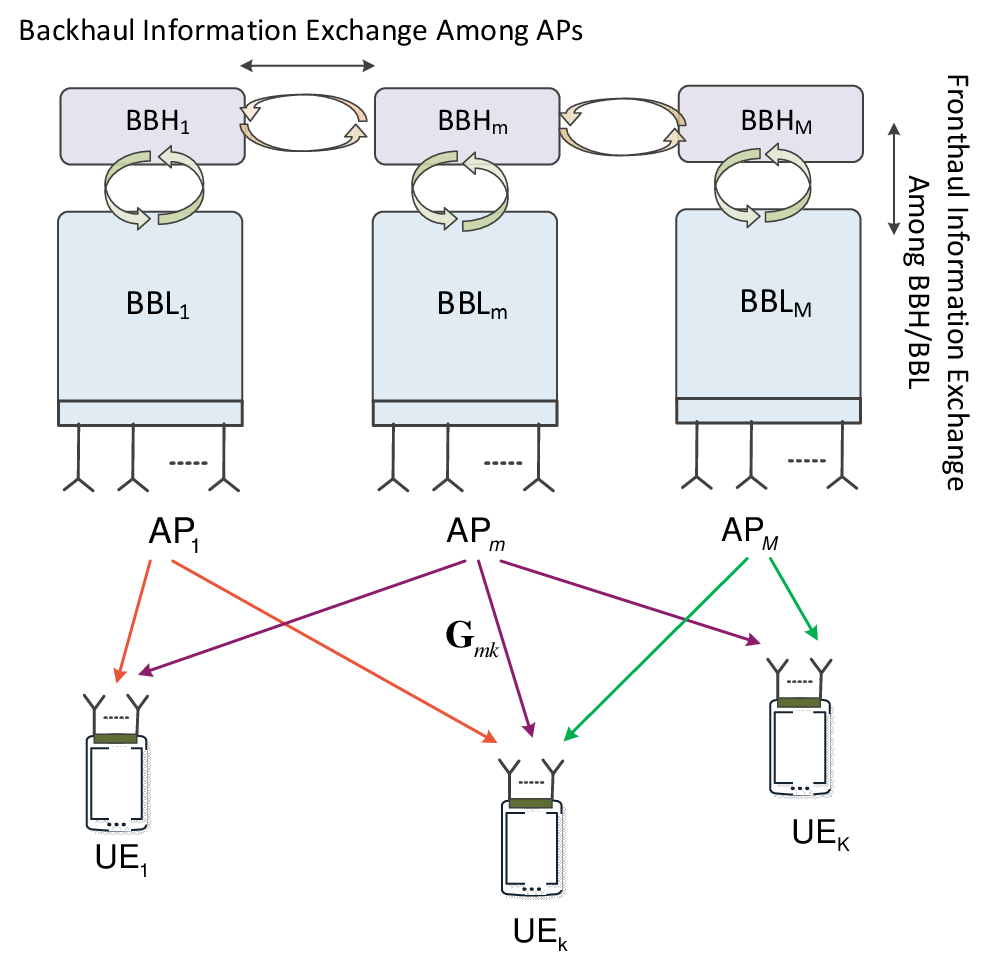}
	\caption{ Illustration of a  CF-mMIMO with BBH and BBL architecture.}
        \vspace{0.5em}
	\label{fig:Fig1}
\end{figure}
\textit{Notation:} We use bold upper case letters to denote matrices, and lower case letters to denote vectors. The superscript   $(\cdot)^H$  stands for the   conjugate-transpose and  tr$(\cdot)$  shows the transpose. A zero-mean circular symmetric complex Gaussian distribution having a variance of $\sigma^2$ is denoted by $\mathcal{CN}(0,\sigma^2)$, while $\mathbf{I}_N$ denotes the $N \times  N$ identity matrix.  Finally, $\mathbb{E}\{\cdot\}$ denotes the statistical expectation.
\section{System Model and Beamforming Design}
We consider a  CF-mMIMO  system  that consists of $M$ APs and $K$  users.  Each AP and user are equipped with $L$ antennas and  $\Nrx$ antennas, respectively. We also consider functional split for baseband unit (BBU). More specifically,  computational functions at BBUs are split into two  entities called baseband low (BBL) and baseband high (BBH), as shown in Fig.~1. The BBH is responsible for managing processing tasks such as beamforming, encoding,  and radio resource management, while the BBL is responsible for processing
tasks such as weight applications, error correction, and modulation.
BBHs are connected through limited-capacity fronthaul links\footnote{In some literature, the term ``backhaul" denotes the link between BBL and BBH when adopting a functional split in BBU. In this paper, we interchangeably use ``fronthaul" to denote the same link, without loss of generality.} to their associated BBLs for sending  information such as  beamforming matrices/vectors, data, and power coefficients. Moreover, BBHs are connected through backhaul links for information exchange between APs. 

We assume a frequency-flat slow fading channel model for each orthogonal frequency-division multiplexing (OFDM) subcarrier. In the sequel, the subcarrier index will be omitted for the sake of notational simplicity.
Let $\qG_{mk}\in \mathbb{C}^{L\times \Nrx}$ be the complex channel matrix between  the $m$-th AP,  and the $k$-th user. It can be modeled as
\begin{equation}
 {\qG}_{mk} = \beta ^{1/2}_{mk}\qH_{mk},
\end{equation}
where $\beta_{mk}$ denotes the large-scale fading coefficient that includes path-loss and
shadowing effects, while  $ {\qH}_{mk}\in \mathbb{C}^{L\times \Nrx}$ is the small-scale fading matrix whose entries are i.i.d. $\mathcal{CN} (0, 1)$ random variables. Large-scale fading coefficients change slowly and may be constant in range of many small-scale fading coherence intervals (over time and frequency bands). Hence, it is assumed that these coefficients are priory known  at each BBL/BBH.

The system is operating according to a time division duplex (TDD) protocol, i.e., the uplink and downlink transmissions occur at different times but use the same frequency. Here, we focus on the downlink  and hence, consider two-phase transmission protocol wherein each coherence interval divides into  uplink training phase and downlink data transmission phase. By exploiting uplink/downlink channel reciprocity, the downlink CSI can be estimated
through uplink training. Let us denote the length of the TDD interval by $\tau_\mathrm{c}$, which is determined by the shortest coherence interval of all users in the network. Also, denote  $\tau_\mathrm{u}$ as the length of the uplink training phase  per coherence interval.

We consider CF-mMIMO with the notion of joint  CB and  UC clustering.  We not that  in the sequel, the set of APs that send signal to user $k$ is called UC cluster of user $k$ and the set of coordinated APs that exchange CSI and perform CB is called CB cluster.
In the downlink data transmission phase, APs perform  CB and send data symbols to a subset  of the users in the system.
In particular, for CB clustering the available  APs are divided into $C$ disjoint CB clusters. APs in each CB cluster share  CSIs such that they have  CSI knowledge of all the users assigned to the APs in this cluster and perform coordinated beamforming.
 Figure 2 illustrates an example of   CF-mMIMO that includes  three users served
by a large number of APs, three UC clusters and two CB clusters. The coloured regions show which UC clusters of AP are serving which users. Clearly, the UC clusters are
partially overlapping which results from the core feature of the cell-free networks. Moreover, the red and brown regions show the  clusters for CSI exchange and CB.

It is noteworthy that the proposed framework for CF-mMIMO with CB and UC clustering is very general and can be degenerated to four special cases as  conventional CF-mMIMO with pure CB,  CF-mMIMO with pure UC clustering, UC CF-mMIMO with fully centralized beamforming and UC CF-mMIMO with fully distributed beamforming as will be discussed in further details in the next sections.
 In the following we will summarize the uplink training and downlink payload transmission phases in more details.
\begin{figure}[t]
	\centering
	\vspace{0em}
	\includegraphics[width=80mm, height=60mm]{  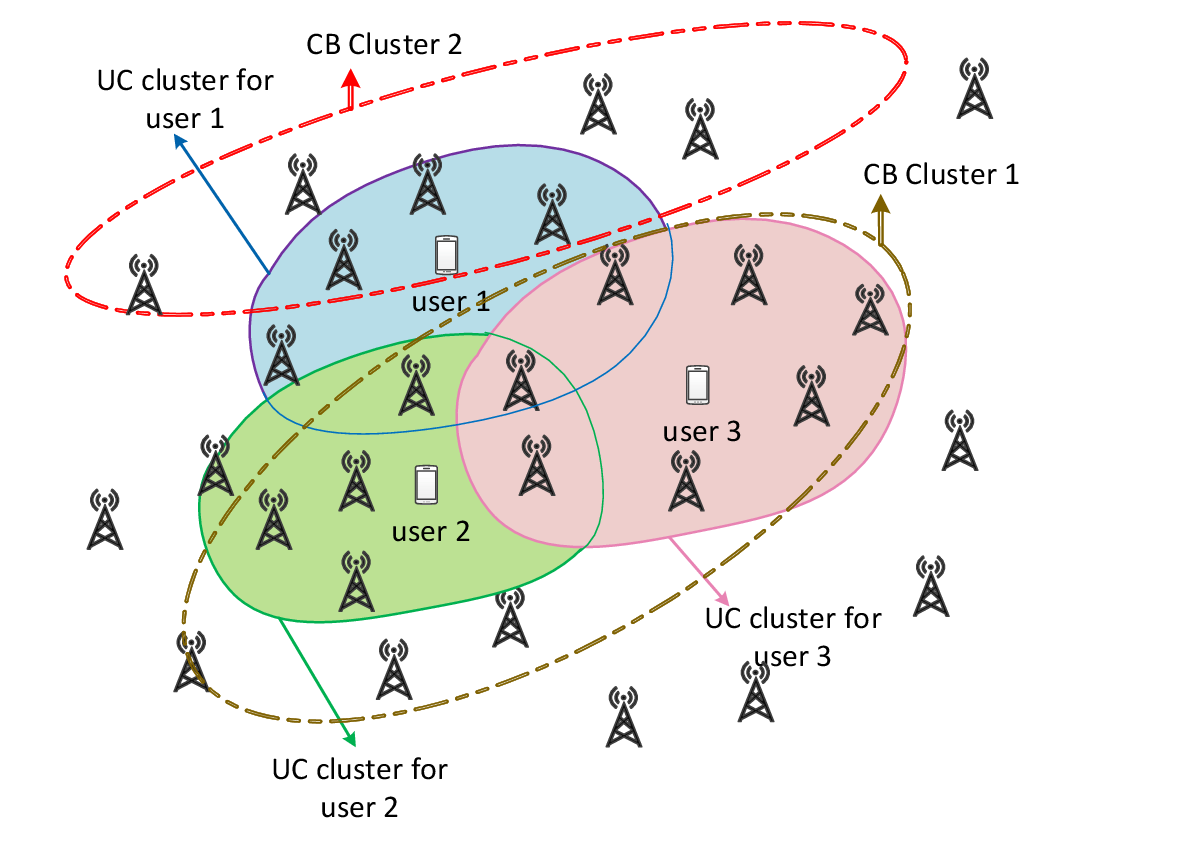}
	\caption{   CF-mMIMO with UC and CB clustering approaches.}
	\label{fig:Fig1}
    \vspace{0.5em}
\end{figure}
\subsection{Uplink Training }
During the uplink training phase, all users send pilot signals to the APs. Accordingly, each AP can estimate the corresponding channels to all the users using the obtained pilot signal. Note that at each AP, the channel estimation is performed at its BBH.
Let us denote by   $\pmb{\Phi}_{\mathrm{u},k}\in \mathbb{C}^{\tau_{\mathrm{u}}\times {\Nrx}}$ the pilot matrix of the $k$-th user where its $n$-th column satisfies $\|\boldsymbol{\phi}_{\mathrm{u},k,n}\|^2=1$, $\forall n \in \Nrx$. The received signal at the $m$-th AP can be written as
\begin{equation}
\qY_{\mathrm{u}, m}=\sum _{k=1}^{K}\sqrt {\tauu}\Pu\qG_{mk}\pmb {\Phi }_{\mathrm{u},k}^{H}+\qW_{\mathrm{u},m},
\end{equation}
where $\Pu$ is the transmit power of each uplink pilot symbol and $\qW_{\mathrm{u},m} \in \mathbb{C}^{L\times \tau_\text{u}}$ is the complex additive Gaussian noise matrix at the $m$-th AP whose elements are i.i.d. $\mathcal{CN}(0,\Snb)$. By projecting $\qY_{\mathrm{u}, m}$ onto $\pmb {\Phi }_{\mathrm{u},k}$, we get
\begin{align}\label{eq:y-mk}
\qY_{\mathrm{u},mk}=&\qY_{\mathrm{u},m}\pmb {\Phi }_{\mathrm{u},k} \nonumber\\
=&\sum _{i=1}^{K}\sqrt {\tauu\Pu}\qG_{mi}\pmb {\Phi }_{\mathrm{u},ik} + \qW_{\mathrm{u},mk},
\end{align}
where $\pmb {\Phi }_{\mathrm{u},ik}=\pmb {\Phi }^{H}_{\text{u},i}\pmb {\Phi }_{\mathrm{u},k}$ and $\qW_{\mathrm{u},mk}=\qW_{\mathrm{u},m}\pmb {\Phi }_{\mathrm{u},k}$.   Now, the estimate of the channel matrix $ \qG_{mk}$ can be calculated as~\cite{Mai:2020:TCOM}
\begin{equation}
 \widehat {\qG}_{mk} = \qY_{\mathrm{u},mk} \qA_{mk},
 \end{equation}
 where
\begin{equation} \label{eq:UL-A1}
\qA_{mk} \triangleq \sqrt {\tauu\Pu}\beta _{mk}\left ({\tauu\Pu\sum _{i=1}^{K}\beta _{mi}\pmb {\Phi }^{H}_{\mathrm{u},ik}\pmb {\Phi }_{\mathrm{u},ik}+ \Snb\qI_{\Nrx}}\right)^{-1}.
\end{equation}

\begin{Remark}
 In the case that pilot sequences assigned to the  users are pairwisely orthonormal,  we have $\pmb {\Phi }_{\mathrm{u},ik}=\qI_{N}$, when $i=k$. Otherwise $\pmb {\Phi }_{\mathrm{u},ik}=\boldsymbol{0}$. Then~\eqref{eq:UL-A1} becomes

\begin{equation}
\qA_{mk} = c_{mk}\qI_{\Nrx}.
\end{equation}
where $c_{mk}\triangleq \frac{\sqrt {\tauu\Pu}\beta _{mk}}{\tauu\Pu \beta _{mk} +  \Snb}.$
Moreover,~\eqref{eq:y-mk} simplifies to
\begin{equation} \label{eq:UL-Ymk}
\qY_{{\mathrm{u}},mk}=\sqrt {\tauu\Pu}\qG_{mk} + \qW_{\mathrm {u},mk},
\end{equation}
and accordingly thermal noise is the only disturbance harming the channel
estimate. A necessary condition for this to happen is $ \tauu\geq KN$.
The estimation error is given by $  \widetilde {\qG}_{mk}= \qG_{mk} -\widehat {\qG}_{mk}$ with $\widehat {\qG}_{mk}$ and $\widetilde {\qG}_{mk}$ are independent and have i.i.d. $\mathcal{CN}(0, \gamma_{mk})$ and i.i.d.  $\mathcal{CN}(0, \beta_{mk}-\gamma_{mk})$ elements, respectively, where $\gamma_{mk}$ is the mean-square of the estimate,
i.e., for any  element $[\widehat {\qG}_{mk}]_{ij}$ we have
$\gamma_{mk}\triangleq \frac{ {\tauu\Pu}\beta _{mk}^2}{\tauu\Pu \beta _{mk} +  \Snb}.
$
\end{Remark}
    \subsection{Downlink Data Transmission and Beamforming Design} \label{sec:Downlink Data Transmission}
In CF-mMIMO with UC clustering, each AP communicates only with a subset of users in the network. The procedure for the selection of the users to serve, i.e., determining the UC clusters, will be specified in the next section.  Denote by $\Km$ the set of users served by the $m$-th AP and define  $K_m=|\Km|$ as the number of users in $\Km$. Now, for all the subsets $\Km$,  $m = 1,\ldots,M$, we can define the set $\Mk$ as the set of all APs that serve the $k$-th user, i.e.,
\begin{equation*}
	 \Mk=\{ m: \, k \in \Km \}, 
 \end{equation*}
with $M_k=|\Mk|$. 
 Let $\Ntx $ be the number of independent downlink data streams sent to each user, $\Ntx \leq N$.
 To be more general and to reduce the amount of data traffic to be fronthauled between BBHs and BBLs, we suppose $\Ntx$ can take an integer value from 1 to $N$. We assume that $\qx_{k}$  is the vector of $\Ntx$ symbols, intended for the $k$-th user, $\mathbb{E}\{\qx_{k}\qx_{k}^H \} = \qI_{\Ntx}$.
Accordingly, the $L \times 1 $ signal transmitted by the $m$-th BBL  can be expressed as
\begin{equation}\label{eq:DL-x1}
	\qs_{m}= \sqrt {\Pb }\sum\nolimits _{k\in \Km}\eta ^{1/2}_{mk} \qQ_{mk}\qx_{k},
\end{equation}
where $\qQ_{mk}$ denotes the downlink beamforming matrix associated with the $k$-th user.
Moreover, $\Pb$ denotes the AP transmit power   and  $\eta_{mk}$  represents  the $k$-th user power control coefficient which needs to satisfy the  power constraint at each AP, i.e.,  $\mathbb{E}\{\|\qs_{m} \|^2 \} \leq P$.
The  $k$-th user receives signal contributions from all the APs; the observable vector is given by
\begin{align}\label{eq:DL-W4}
\textbf {r}_{k}=&\sum _{m=1}^{M}\qG^{H}_{mk}\qs_{m} +  {\qn}_{k}\nonumber\\
&=\sqrt {\Pb }
\sum\nolimits_{m \in \Mk}
\sum\nolimits_{k'\in \Km}
\eta ^{1/2}_{mk'}
\qG^{H}_{mk} \qQ_{mk'}\qx_{k'} +  {\qn}_{k},
\end{align}
where $\qn_k \in \mathbb{C}^{ \Nrx \times 1}$ is the  noise vector whose elements are i.i.d. $\mathcal{CN}(0,\Sn)$.
We can equivalently reformulated \eqref{eq:DL-W4} as
\begin{align}\label{eq:DL-W50}
\textbf {r}_{k}
&=\sqrt {\Pb }
\sum\nolimits _{m\in \Mk}
\eta ^{1/2}_{mk}
\qG^{H}_{mk} \qQ_{mk}\qx_{k}\nonumber\\
&+\sqrt {\Pb }
\sum\nolimits _{\substack {k'=1 \\ k' \neq k}}^{K}
\sum\nolimits _{m\in \Mkp}
\eta ^{1/2}_{mk'}
\qG^{H}_{mk} \qQ_{mk'}\qx_{k'} + \qn_k.
\end{align}
The first term is the desired signal, the second term
describes the multi-user interference (all the signal components intended for user $k', k' \neq k$).

In order to mitigate  interference from other APs, we employ ZF-based CB
which enables a  cluster of APs to share their local CSI and construct  ZF precoder to cancel the   intra-cluster interference.
Let $\Csm$ denotes the set of APs which are in the same CB cluster with AP $m$ and define  $C_m=|\Csm|$ as the number of APs in $\Csm$.  Specifically, if AP $n$  and AP $m$ belong to the same CB cluster then  $\mathcal{C}_n=\Csm$. Let us denote the set $\Usm$ by the union of the sets $\Kmp$  for $m' \in \Csm$, as
\begin{equation*}
	\Usm= \bigcup_{ m' \in \Csm}\Kmp.
\end{equation*}
Clearly, each user in $\Usm$  is served by at least one of the APs in $\Csm$. Since UC clusters are partially overlapping, there might be repeated user indices in $\Usm$. In this case, repeated indices are removed and  $\Usm$ is updated to include only the unique entries of the user indices.

Let
$\widehat {\qG}_{m}$ be an $(L C_m)\times (\Nrx  U_m)$ collective channel estimation matrix for corresponding APs in set  $ \Csm$, where $U_m=|\Usm|$ denotes the number of users in $\Usm$. More specifically, $\widehat {\qG}_{m}$ consists of  $C_m U_m$ blocks of dimension $L\times\Nrx$, $\widehat {\qG}_{ij}$, each corresponding to a particular AP $i$ in set  $ \Csm$ and user $j$ in set  $ \Usm$ as

\begin{equation}\label{eq:Qij}
\widehat {\qG}_{m}=[\widehat {\qG}_{ij}: i \in \Csm, j \in \Usm].
\end{equation}
 Now, by  employing  ZF precoding technique the whole CB matrix constructed for the APs in $\Csm$,  $\overline{\qQ}_{m} \in \mathbb{C}^{(LC_m)\times (\Nrx  U_m)}$,  can be written as
\begin{equation}\label{eq:Q}
\overline{\qQ}_{m}=
\widehat {\qG}_{m} \left ({\widehat {\qG}_{m}^{H} \widehat {\qG}_{m} }\right)^{-1}.
\end{equation}
The CB matrix for APs in set $\Csm$ can be rewritten as $\overline{\qQ}^{\mathrm{ZF}}_{m}= \overline{\qQ}_{m} \overline{\qP}$, where $\overline{\qP}$   includes the power allocation coefficients. To avoid interference toward the users in $\Usm$,  $\widehat {\qG}_{m}^H\overline{\qQ}^{\mathrm{ZF}}_{m}$ should be diagonal~\cite{nayebi:2017:wcom} and hence it is necessary to have $\eta_{m'k}= \eta_{k,\cm}$, $\forall m' \in \Csm$. That is, in a given CB cluster, power coefficients are only functions of $k$
and $\eta_{k,\cm}$ shows the power coefficient allocated for all APs in set $\Csm$ to user $k$. Thus, the beamforming matrix can be further expressed
as
\begin{align}
\overline{\qQ}^{\mathrm{ZF}}_{m}=\overline{\qQ}_{m}\mathbf{P}_m,
\end{align}
where $\mathbf{P}_m$ is an $NU_m \times NU_m$ block diagonal matrix with the $j$-th  matrix on the diagonal ${\eta^{1/2}_{k,\cm}} \qI_{\Nrx}$ and $j$ is the index of user $k$ in set $\Usm$. Here, we also note that to implement ZF-based precoding, the total number of  transmit antennas at   APs in CB $\cm$, must meet the requirement $L   C_m \geq \Nrx  U_m$.

Since $\Ntx \leq N$, we employ singular value decomposition (SVD) technique which enables each BBH  to construct its precoding matrix from the whole beamforming matrix designed for each CB cluster. Performing the SVD of  $ \widehat {\qG}_{m}$  yields
\begin{equation}\label{eq:G}
\widehat {\qG}_{m} = \qU_m \boldsymbol{\Sigma}_m \qV^H_m,
\end{equation}
where the matrices $\qU_m \in \mathbb{C}^{(LC_m) \times (NU_m)}$ and $\qV_m \in \mathbb{C}^{(NU_m) \times (NU_m)}$ contain,  the left and right singular vectors and $\boldsymbol{\Sigma}_m$ includes singular values in nonascending order, corresponding to the
non-zero eigenmodes, respectively.
Substituting \eqref{eq:G} into \eqref{eq:Q}, we have
\begin{align}\label{eq:Q2}
\overline{\qQ}_{m}&=\qU_m \boldsymbol{\Sigma}_m \qV^H_m \left (\qV_m \boldsymbol{\Sigma}_m \qU^H_m  \qU_m \boldsymbol{\Sigma}_m \qV^H_m\right)^{-1}\nonumber\\
&=\qU_m \boldsymbol{\Sigma}_m \qV^H_m \left (\qV_m \boldsymbol{\Sigma}^2_m \qV^H_m\right)^{-1}\nonumber\\
&=\qU_m \boldsymbol{\Sigma}^{-1}_m \qV^H_m.
\end{align}
Now, each BBH can  construct its precoding matrix for transmission to  user $k$ by choosing the corresponding $\Ntx$ column vectors associated to its strongest singular values as expressed in~\eqref{eq:Qmk}.
In particular, let AP $m$ correspond to the  $i$-th element of set $\Csm$,  $i\in \{1,\cdots,C_m\}$, and user $k$ correspond to the $j$-th element of set $\Usm$,  $j\in \{1,\cdots,U_m\}$,
 then we set $\omega_{mk}=i$ and $\nu_{mk}=j$, respectively. Accordingly, the downlink beamforming matrix constructed for the $m$-th AP to the $k$-th user can be calculated as the submatrix of $\qQ_{m}$ obtained by selecting the $L$ rows $a_{mk}$ to $\check{a}_{mk}$ and $\Ntx$ columns $b_{mk}$ to $\check{b}_{mk}$ as
\begin{align}\label{eq:Qmk}
\qQ_{mk}&= \left[{\overline{\qQ}_{m}}\right]_{(a_{mk}: \check{a}_{mk}, b_{mk}:\check{b}_{mk})}\nonumber\\
&=\qU_{mk}  \boldsymbol{\Sigma}_m^{-1} \qV^H_{mk},
\end{align}
where $\qU_{mk}\triangleq[\qU_m ]_{(a_{mk}: \check{a}_{mk},:)}$  with $a_{mk}=(\omega_{mk}-1)\times L+1$ and $\check{a}_{mk}=a_{mk}+ L-1$. Also, $\qV^H_{mk}\triangleq [\qV^H_m]_{(:, b_{mk}:\check{b}_{mk})}$ with $b_{mk}=(\nu_{mk}-1)\times\Nrx+1$ and  $\check{b}_{mk}=b_{mk}+\Ntx-1$.

We note that, in the proposed low complexity ZF-based CB all the channels ${\qG}_{ij}$ between AP $i$, $i \in \Csm$, and user $j$, $j \in \Usm$, are accounted in CB design of cluster $\Csm$. However, according to the UC clustering, each AP $i$  may not transmit  to  one or some of the users in  $\Usm$ or equivalently  each user $j$ may  not be served  by  one or some of the APs in  $\Csm$. This might affect the orthogonality in  beamforming design  and result intra-cluster interference. We argue that, in real life  CF-mMIMO scenario with efficient UC clustering, the portion of this intra-cluster interference is insignificant compared to the desired signal.
The main reason is that  those APs that are far from a given user $j$ and have  smaller channel gains are not selected for UC cluster of user $j$, which can be interpreted as the non-orthogonality and intra-cluster interference due to the proposed CB design  is negligible and does not degrade the performance. 
We will justify this assumption in Section~\ref{sec:Numerical Results}.

\section{Performance Analysis and Fronthaul Requirements}
In this section, we first present analytical expression for the SE of the   CF-mMIMO  with CB  assuming that each user uses   MMSE-SIC scheme to detect the desired symbols. Moreover, we provide a practical formulation for the fronthaul requirement of CF-mMIMO with BBH-BBL operation. 
\subsection{Performance Analysis}
The received signal at the $k$-th user can be rewritten as
\begin{align}\label{eq:DL-W5}
\textbf {r}_{k}
=
\sqrt {\Pb }\sum _{k'=1}^K
{\qD}_{kk'} {\qx}_{k'}
 + \qn_k,
\end{align}
where ${\qD}_{kk'}=\sum _{m\in \Mkp}\eta_{k', {\cm}}^{1/2}\qG^{H}_{mk}\qQ_{mk'}$.
Accordingly, we can derive an achievable downlink SE of user $k$ as follows.
\begin{proposition}
\label{prop:general_rate_exact_orthogonal}
For general   CF-mMIMO with ZF-based CB  given statistic CSI\footnote{In practical scenarios, channel statistics can be acquired through various methods, including time-averaged estimation, user feedback, model-based predictions, and pilot signals. Moreover, machine learning   techniques are increasingly being used to acquire channel statistics from historical data and user context~\cite{Wang:2019:WCL}.}  and using MMSE-SIC detectors, an achievable downlink SE of user $k$ can be calculated as
\begin{equation}\label{eq:rate_general_o}
 R_{k} = (1 \!-\! \tauu/\tau_\mathrm{c}) \log _{2} \left |{ \qI_{\Ntx} \!+ \! \Pb \bar {{\qD}}^{H}_{kk} \left ({\boldsymbol{\Psi }^{b}_{kk}}\right)^{-1} \bar {{\qD}}_{kk} }\right |,
 \end{equation}
where
\begin{align}
\bar {{\qD}}_{kk}= \sum _{m\in \Mk}\eta ^{1/2}_{k, \cm} \mathop {\mathrm {\mathbb {E}}}\nolimits \left \{{ {\qG}^{H}_{mk} } \qQ_{mk}\right\},
\end{align}
and
\begin{align*}
{\boldsymbol{\Psi }^{b}_{kk}}=&\Pb\mathbb {E} \bigg\{\sum _{k'=1}^{K}     {{\qD}}_{kk'}  {{\qD}}_{kk'}^{H}\bigg\}-
\Pb\bar {{\qD}}_{kk}\bar {{\qD}}^{H}_{kk} + \Sn\qI_{N}.
\end{align*}
\end{proposition}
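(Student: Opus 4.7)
My plan is to apply the standard ``use-and-then-forget'' capacity bounding technique adapted to the multiple-antenna receiver with MMSE-SIC detection, since the users know only channel statistics (not instantaneous CSI). First I would split the received signal in~\eqref{eq:DL-W5} into a deterministic effective channel part and a stochastic residual:
\begin{align*}
\mathbf{r}_k &= \sqrt{P}\,\bar{\mathbf{D}}_{kk}\mathbf{x}_k \\
&\quad + \sqrt{P}\,(\mathbf{D}_{kk}-\bar{\mathbf{D}}_{kk})\mathbf{x}_k + \sqrt{P}\!\!\sum_{k'\neq k}\!\mathbf{D}_{kk'}\mathbf{x}_{k'} + \mathbf{n}_k,
\end{align*}
and treat the last three terms as an aggregate ``effective noise'' $\boldsymbol{\nu}_k$. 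The interpretation is that, since the user does not estimate the downlink channel, the only portion of the effective channel known for decoding is its mean $\bar{\mathbf{D}}_{kk}=\mathbb{E}\{\mathbf{D}_{kk}\}$; the channel hardening associated with the ZF-based CB ensures this mean is a meaningful proxy.

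Next I would verify the two properties needed for the Gaussian lower bound: (i) $\boldsymbol{\nu}_k$ has zero mean by construction, and (ii) $\boldsymbol{\nu}_k$ is uncorrelated with $\mathbf{x}_k$. Property (ii) follows because $\mathbb{E}\{\mathbf{D}_{kk}-\bar{\mathbf{D}}_{kk}\}=\mathbf{0}$ together with the independence of $\mathbf{x}_k$ and the channel entries, and because $\mathbf{x}_{k'}$ for $k'\neq k$ is independent of $\mathbf{x}_k$ with zero mean. I would then compute $\mathbb{E}\{\boldsymbol{\nu}_k\boldsymbol{\nu}_k^H\}$ by expanding term by term, using $\mathbb{E}\{\mathbf{x}_{k'}\mathbf{x}_{k'}^H\}=\mathbf{I}_{\bar{N}}$, the independence of distinct data vectors, and the identity
\[
\mathbb{E}\{(\mathbf{D}_{kk}-\bar{\mathbf{D}}_{kk})(\mathbf{D}_{kk}-\bar{\mathbf{D}}_{kk})^H\} = \mathbb{E}\{\mathbf{D}_{kk}\mathbf{D}_{kk}^H\}-\bar{\mathbf{D}}_{kk}\bar{\mathbf{D}}_{kk}^H,
\]
which, after combining with the interference and thermal noise contributions, yields exactly the claimed $\boldsymbol{\Psi}^{b}_{kk}$.

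With the deterministic effective channel $\sqrt{P}\bar{\mathbf{D}}_{kk}$ and zero-mean uncorrelated effective noise of covariance $\boldsymbol{\Psi}^{b}_{kk}$, I would invoke the worst-case Gaussian noise argument: for a fixed second-order statistic of the effective noise, replacing it with a circularly symmetric complex Gaussian vector of the same covariance yields a lower bound on the mutual information. For the resulting equivalent Gaussian MIMO channel $\tilde{\mathbf{r}}_k=\sqrt{P}\bar{\mathbf{D}}_{kk}\mathbf{x}_k+\tilde{\boldsymbol{\nu}}_k$ with Gaussian input $\mathbf{x}_k$, the capacity-achieving receiver structure is MMSE followed by successive interference cancellation across the $\bar{N}$ streams, giving the mutual information
\[
\log_{2}\!\bigl|\mathbf{I}_{\bar{N}}+P\bar{\mathbf{D}}^{H}_{kk}(\boldsymbol{\Psi}^{b}_{kk})^{-1}\bar{\mathbf{D}}_{kk}\bigr|.
\]
Multiplying by the pre-log factor $(1-\tau_{\mathrm{u}}/\tau_{c})$ to account for uplink pilot overhead delivers~\eqref{eq:rate_general_o}.

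The main obstacle I anticipate is the justification of step three: one must be careful that the MMSE-SIC architecture, typically stated for channels known at the receiver, indeed achieves the log-det formula in our setting where only the statistic $\bar{\mathbf{D}}_{kk}$ is used for decoding. The standard trick is to view the single-letter channel as the equivalent Gaussian model with deterministic gain matrix $\sqrt{P}\bar{\mathbf{D}}_{kk}$ and Gaussian noise $\boldsymbol{\Psi}^{b}_{kk}$; MMSE-SIC is then exactly capacity-achieving for this equivalent channel, and the worst-case Gaussian step ensures the bound is achievable on the original channel. A secondary nuisance is the bookkeeping in expanding $\mathbb{E}\{\mathbf{D}_{kk'}\mathbf{D}_{kk'}^H\}$ across the CB clusters, but this is algebraic and does not rely on any new ideas beyond the linearity of expectation and the per-cluster power-coefficient structure $\eta_{m'k}=\eta_{k,\mathcal{C}_m}$ for $m'\in\mathcal{C}_m$.
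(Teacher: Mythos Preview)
Your proposal is correct and follows essentially the same approach as the paper: both rest on the hardening bound with worst-case Gaussian noise and MMSE-SIC detection on the resulting equivalent deterministic MIMO channel. The paper's proof is more compact in that it simply invokes a general MMSE-SIC achievable-rate formula with side information $\boldsymbol{\Theta}_{k}$ (from~\cite{Mai:2020:TCOM}) and then specializes to the statistical-CSI case $\boldsymbol{\Theta}_{k}=\bar{\mathbf{D}}_{kk}=\mathbb{E}\{\mathbf{D}_{kk}\}$, whereas you derive the same bound from first principles via the use-and-then-forget decomposition and explicit computation of the effective-noise covariance.
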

\begin{proof}
The achievable downlink SE of the $k$-th user, $k\in \Km$,
with MMSE-SIC detection scheme for the received signal $\textbf{r}_k$ in \eqref{eq:DL-W5} and side information $\boldsymbol{\Theta }_{k}$ (assuming that $\boldsymbol{\Theta }_{k}$ is independent of $\qx_k$) is given by~\cite{Mai:2020:TCOM}
\begin{equation}\label{eq:DL-W6}
R_{k} = (1 - \tau _{\text {tot}}/\tau _{\text {c}})\mathbb {E}\left \{{ \log _{2} \left |{ \qI_{\Ntx} + \boldsymbol{\Upsilon }^{a}_{kk} }\right |}\right \},
\end{equation}
where
\begin{align}
 \boldsymbol{\Upsilon }^{a}_{kk}\triangleq\Pb\mathbb {E}\{{\qD}^{H}_{kk}|\boldsymbol{\Theta }_{k}\} ({\boldsymbol{\Psi }^{a}_{kk}})^{-1} \mathbb {E}\{{\qD}_{kk}|\boldsymbol{\Theta }_{k}\},  
  \end{align}
  and
  \begin{align}
  {\boldsymbol{\Psi }^{a}_{kk}}\triangleq &\qI_{N} +\mathbb {E}\big\{(\Pb\sum_{k'=1}^K{\qD}_{kk'}{\qD}^{H}_{kk'}|\boldsymbol{\Theta }_{k})\big\}-\Pb\mathbb {E}\{{\qD}_{kk}|\boldsymbol{\Theta }_{k}\}\nonumber\\
  &\times\mathbb {E}\{{\qD}^{H}_{kk}|\boldsymbol{\Theta }_{k}\}.
 \end{align} 
with $\tau _{\text {tot}}$ denotes the total training duration for each coherence interval.
Given statistic CSI, we have $\boldsymbol{\Theta }_{k}=\bar {{\qD}}_{kk}=\mathbb {E} \{{\qD}_{kk}\}$  and  $\tau _{\text {tot}}=\tauu$ and hence, using~\eqref{eq:DL-W6},  we can obtain~\eqref{eq:rate_general_o}.
\end{proof}

Note that to obtain \eqref{eq:rate_general_o}, user $k$ needs to know only the statistical properties of the channels. Therefore, there is no need for downlink pilots, which can reduce the resources required for downlink channel estimation, such as transmit power and estimation overhead. This achievable SE corresponds the rate obtained from the hardening bouding technique which is widely used in massive MIMO literature.

\begin{Remark}   
The SE analysis in Proposition~\ref{prop:general_rate_exact_orthogonal} is  general and holds for different UC and CB clustering schemes. When each user is served by a set of APs, $M_k\leq M$ and $C_m =1$, $\forall m$, we have CF-mMIMO with pure UC approach. In contrast, when each user served by all the APs, i.e., $M_k=M$, $\forall k$, and $C_m > 1$, we have CF-mMIMO with pure CB approach.
Further, the SE expression for  CF-mMIMO can be easily computed for any $\Csm$ and the corresponding $\Usm$  set. Two important special cases of CB clustering are described below and compared with each other in Section~\ref{sec:Numerical Results} by means of numerical results.

{1)   CF-mMIMO with Fully Centralized Beamforming:}
In the case of CF-mMIMO with fully centralized beamforming which is a form of network MIMO  all the CSIs  are
exchanged by all the APs (through the backhaul links between BBHs) to cancel interference. More specifically,   there is one CB cluster consists of all APs in the network or equivalently $\Csm=\{1,\cdots,M\}, \forall m$, $\Usm=\mathcal{U}=\{1,\cdots,K\}, \forall m$,    and $\eta_{k,\cm}= \eta_k$.


{2)  CF-mMIMO with Fully Distributed Beamforming:}
In this case, the beamforming can be done locally at each AP without CSI exchange with other APs. This makes the network truly distributed. In particular, each BBH at each AP utilizes its own local channel
estimates to generate a ZF precoder.   CF-mMIMO with distributed beamforming corresponds to the case when there are $M$ beamforming clusters, i.e., $C=M$, $\Csm=m$, $\forall m$, with cardinality $1$, each  includes $\Usm=\Km$ users. Also, in this case power control coefficient for user $k$ is  $\eta_{k, \cm}=\eta_{k, m}$.
In this case, each AP can only mitigate its own interference, but not interference from other APs.
Note that this  can be  seen as a particular CF-mMIMO configuration with pure UC clustering and fully distributed beamforming which does not require any CSI to be exchanged between BBHs.

\end{Remark}

\subsection{Fronthaul Requirements}
\label{app:Fronthaul Requirements}
In a CF-mMIMO system with BBH-BBL operation, precoding weights are calculated at BBH. Then precoding weights and user data are sent to BBL via fronthaul links. 
We consider packed-based evolved CPRI (eCPRI) for the fronthaul  transmission. More specifically, fronthaul requirement of $m$-th AP for downlink data transmission in the considered CF-mMIMO system with BBH-BBL operation can be calculated as
\begin{align}\label{eq:FH1}
\mathrm{FH}_{m,\mathrm{data}}= \frac{\log_2 (M_{\mathrm{order}})\Ntx K_m N_{\mathrm{subcarrier}} N_{\mathrm{OFDM}}}{\mathrm{eCPR_{eff}}\mathrm{delay_{data}}},
\end{align}
where $M_{\mathrm{order}}$ is the modulation order, $N_{\mathrm{subcarrier}}$ denotes the number of OFDM subcarrier, and $\mathrm{delay}_{\mathrm{data}}$  and $\mathrm{eCPR}_{\mathrm{eff}}$ are constant parameters.
Moreover, the required fronthaul capacity for beamforming weights is given by
\begin{align}\label{eq:FH2}
\mathrm{FH}_{m,\mathrm{BF}}= \frac{2L \Ntx K_m N_{\mathrm{bits}}N_{\mathrm{Gran}}}{\mathrm{eCPR_{eff}}\mathrm{delay_{precoder}}} ,
\end{align}
where $N_{\mathrm{bits}}$ is the number of quantization bits and $N_{\mathrm{Gran}}$ is the beamforming granularity. More specifically, if we denote  the number of subcarriers grouped under one precoding operation as $N_{\mathrm{Group}}$, then $N_{\mathrm{Gran}}=\frac{N_{\mathrm{subcarrier}}}{N_{\mathrm{Group}}}$.
Based on~\eqref{eq:FH1} and~\eqref{eq:FH2}, fronthaul requirement scales with the number of users that each AP serves and also the number of data streams to be sent to each user. Accordingly, in the proposed  CF-mMIMO system, we constrain the downlink traffic bandwidth to be fronthauled  by limiting the number of users that each AP  serves  and the number of independent downlink data streams sent to each user, i.e.,
\begin{itemize}
	\item Each BBH $m$ only needs to send  data and precoding matrix of $K_m \leq K$ users to its BBL through fronthaul link (this is in contrast with the conventional CF-mMIMO system, which requires    fronthauled information of all $K$ users).
	\item  We suppose $\Ntx$  takes an integer value from $1$ to $N$. Therefore, each fronthaul link needs to support  $\Ntx \leq N$ downlink data streams per user.
\end{itemize}
On the other hand, a higher modulation order, $M_{\mathrm{order}}$, is essential to achieve the high data rates. However, from~\eqref{eq:FH1}, fronthaul requirement for data transmission increases with $M_{\mathrm{order}}$, thus, there is a trade-off  between the SE and fronthaul requirement.
From the information-theoretic perspective,  modulation-constrained achievable information rate, $\mathcal{R}_{\mathrm{mo}}$, is bounded as $\mathcal{R}_{\mathrm{mo}}< \bar{\mathcal{C}}$, where $\bar{\mathcal{C}}$ is  the AWGN channel capacity. Besides, $\mathcal{R}_{\mathrm{mo}}$ cannot be greater than the entropy of the corresponding constellation, i.e., $\mathcal{R}_{\mathrm{mo}}< \log_2 (M_{\mathrm{order}})$. Thus,
$\mathcal{R}_{\mathrm{mo}}$ can be upper-bounded as
 \begin{align}\label{eq:FH3}
\mathcal{R}_{\mathrm{mo}} \leq \mathrm{min} (\bar{\mathcal{C}},\log_2 (M_{\mathrm{order}})).
\end{align}
At high signal-to-noise ratio (SNR),  $\mathcal{R}_{\mathrm{mo}}$ approaches the
constellation entropy $\log_2 (M_{\mathrm{order}})$ and is independent of the SNR, so that the upper bound in~\eqref{eq:FH3} is tight  at high SNR~\cite{Urlea:2021:JLT}.

Accordingly, from~\eqref{eq:FH1},~\eqref{eq:FH2}, and~\eqref{eq:FH3},  the  per-AP fronthaul capacity constraints can be written as
\begin{align}\label{eq:Kmax_cons}
&\mathrm{FH}_{m,\mathrm{data}}+\mathrm{FH}_{m,\mathrm{BF}} =\nonumber\\
&\Ntx K_m\left(\alpha_1 \log_2 (M_{\mathrm{order}}) + \alpha_2\right)\leq \mathrm{FH}_{\mathrm{max}},~~~ \forall m,
\end{align}
and
\begin{align}\label{eq:FHMorder}
 R_{k} \leq  \log_2 (M_{\mathrm{order}}) ,~~~ \forall m,
\end{align}
where $\mathrm{FH}_{\mathrm{max}}$ is the per-AP maximum fronthaul constraint  for downlink data transmission and beamforming weights, $\alpha_1\triangleq  \frac{N_{\mathrm{subcarrier}} N_{\mathrm{OFDM}}}{\mathrm{eCPR_{eff}}\mathrm{delay_{data}}}  $, and $\alpha_2 \triangleq \frac{2 L N_{\mathrm{bits}}N_{\mathrm{Gran}} }{\mathrm{eCPR_{eff}}\mathrm{delay_{precoder}}} $.


\section{ Joint User Association and Power Allocation}
We aim to  optimize UC clustering and  downlink transmit powers for the maximization of the system sum SE, subject to per-AP fronthaul capacity and maximum  transmit power  constraints.  We would like to highlight that the presence of multiple antennas at users in the CF-mMIMO system plays a critical role in the optimization process. In particular, multi-antenna users enhance SINR and detection through spatial diversity and multiplexing, influencing resource allocation via the effective channel gain matrix. On the other hand, multi-antenna configurations increase fronthaul usage, tightening constraints on the number of users served. Our optimization framework accounts for these interactions by incorporating fronthaul-constrained rate limits, ensuring efficient resource allocation under practical constraints.
Accordingly,
in what follows we formulate  joint  user association and power allocation for  a given large-scale fading coherence time.  
\vspace{-1em}
\subsection{Sum-SE Maximization}
With~\eqref{eq:DL-x1} the expected transmit signal power from AP $m$ is given by
\begin{align}
\mathrm {\mathbb {E}}\{\qs_{m}^H \qs_{m}\}
&=\Pb\sum\nolimits_{k \in \Km}\eta_{k,\cm}\text {tr}\left( \mathrm {\mathbb {E}}\{ \qQ_{mk}\qQ_{mk}^H\} \right).
\end{align}
Therefore, the per-AP power constraint can be written as
\begin{align}\label{eq:P_const}
\sum\nolimits_{k \in \Km}\eta_{k,\cm}\text {tr}\left( \mathrm {\mathbb {E}}\{ \qQ_{mk}\qQ_{mk}^H\} \right)\leq 1.
\end{align}

Now, let $\boldsymbol {\eta }$ denote  the set of all power control coefficients $\boldsymbol {\eta } \triangleq \{\eta_{k,\cm}: k=1,\cdots, K; m=1, \cdots, M\}$.
Recall that for the given AP $n$  and AP $m$ in the same CB cluster, we have  $\mathcal{C}_n=\Csm$ and $\eta_{k,\cm}=\eta_{k,\cn} \forall k$.
For convenience, let  $\boldsymbol{\mathcal {K}} \triangleq \{\bigcup\Km, m = 1, \cdots, M\}$  denote the UC clustering control
variable. Now, for the given CB clustering, the  sum-SE maximization problem  can be formulated as
\begin{subequations}\label{eq:problem2}
\begin{align}
&\max _{\boldsymbol {\eta}, \Kset} \sum_{k=1}^K R_{k}(\boldsymbol {\eta },\Kset) \label{eq:problem2:O1}\\
&\text {st.}~\eta _{k,\cm}\geq 0, \qquad  \forall m, k, \label{eq:problem2:C1} \\
&\hphantom {\text {st.}}  R_{k}(\boldsymbol {\eta },\Kset) \leq \log_2 (M_{\mathrm{order}}), \qquad \forall k, \label{eq:problem2:C2} \\
&\hphantom {\text {st.}}\Ntx K_m \left(\alpha_1 \log_2 (\Morder) + \alpha_2\right)\leq \mathrm{FH}_{\mathrm{max}}, \quad \forall m, \label{eq:problem2:C3} \\
&\hphantom {\text {st.}} \sum\nolimits_{k \in \Km}\eta_{k,\cm}\text {tr}\left( \mathrm {\mathbb {E}}\{ \qQ_{mk}\qQ_{mk}^H\} \right)\leq 1, \qquad  \forall m. \label{eq:problem2:C4}
\end{align}
\end{subequations}

Optimization problem~\eqref{eq:problem2} is  obviously nonconvex due to \emph{i)} the presence of unknown optimization
variables $\Km$ and $\eta_{k, \cm}$, $m \in \{1,\cdots, M\},$ $k \in \{1,\cdots,K\}$, appearing as products   in  $R_{k}(\boldsymbol {\eta },\Kset)$ \emph{ii)} beamforming matrix $\qQ$ is a non-convex function of $\Kset$.  Therefore, the
joint optimization of $\boldsymbol {\eta}$ and $\Kset$ cannot be implemented
in a complexity-efficient way  and hence, we separate the optimization variables and solve
the corresponding sub-problems with time-efficient algorithms.

\subsection{Power Allocation Optimization Given  User Association}
The power allocation problem with fixed UC association $\Kset$ is expressed as
\begin{subequations}\label{eq:problem_PA1}
\begin{align}
&\max _{\boldsymbol {\eta}} \sum_{k=1}^K R _{k}(\boldsymbol {\eta }) \label{eq:problem_PA1:O1}\\
&\text {st.}~\eta _{k,\cm}\geq 0,\qquad  \forall m, k, \label{eq:problem_PA1:c1}  \\
&\hphantom {\text {st.}}  R _{k}(\boldsymbol {\eta }) \leq  \log_2 (M_{\mathrm{order}}), \qquad \forall k,   \label{eq:problem_PA1:c2}\\
&\hphantom {\text {st.}} \sum\nolimits_{k \in \Km}\eta_{k,\cm}\text {tr}\left( \mathrm {\mathbb {E}}\{ \qQ_{mk}\qQ_{mk}^H\} \right)\leq 1, \qquad  \forall m. \label{eq:problem_PA1:c3}
\end{align}
\end{subequations}
This problem is  non-convex due to the non-convex objective function~\eqref{eq:problem_PA1:O1} and non-convex constraint~\eqref{eq:problem_PA1:c2}, which
makes its solution difficult. Thus, we resort to sequential optimization framework, also known as 
MM approach. By using this iterative framework, we can reformulate the original
non-convex problem into a series of auxiliary convex subproblems that
can be easily solved by algorithms that have  low
computational complexity and their solutions converge to
a locally optimal solution of the original problem\cite{Hunter:2004:MM}. 
The proposed MM strategy, at each iteration, solves an auxiliary problem whose objective function is a lower-bound of Problem~\eqref{eq:problem_PA1:O1}.
Such a lower bound must fulfill the following criteria: it  must be equal to the original function~\eqref{eq:problem_PA1:O1}; its gradient and of the original objective function~\eqref{eq:problem_PA1} must be equal at the point $\boldsymbol {\eta }_k$, which is updated at each algorithm's iteration.
 To this end, we need to derive a proper lower-bound of the objective function of~\eqref{eq:problem_PA1}. Using the fact that $\mathrm{det}(\qI_{M}+\qA_1\qA_2)=\mathrm{det}(\qI_{N}+\qA_2\qA_1)$, where $\qA_1$ and $\qA_2$ are matrices of size $M\times N$ and $N\times M$, respectively, the SE of the user $k$, $k=1, \cdots, K$,  can be rewritten as
\begin{align}\label{eq:R_PA}
R_{k}  (\boldsymbol {\tieta }) &=(1 \!-\! \tauu/\tau_\mathrm{c})\left[g\left(\boldsymbol {\Omega}_k  (\boldsymbol {\tieta })\right)-g\left(\boldsymbol {\Xi}_k  (\boldsymbol {\tieta })\right)\right],
 \end{align}
where $\tieta\triangleq\{\etkpmn: k'=1,\cdots, K; m=1, \cdots, M, n=1, \cdots, M\}$, $\etkpmn=\eta ^{1/2}_{k',\cm}\eta ^{1/2}_{k', \cn}$,   $g(\cdot)\triangleq\log_2|(\cdot)|$,
\begin{align} \label{eq:omega}
\boldsymbol {\Omega}_k   ( {\tieta })&=\Pb\sum _{k'=1}^{K} \sum _{m \in \Mkp}\sum _{n \in \Mkp} \etkpmn\mathbb {E}\big\{\qG^{H}_{mk}{\qQ}_{mk'} \times \nonumber\\
&\qQ^{H}_{nk'}\qG_{nk}\big\}+ \Sn\qI_{N},
\end{align}
and
\begin{align} \label{eq:si}
\boldsymbol {\Xi}_k   ({\tieta })&=\Pb\sum _{k'=1}^{K} \sum _{m \in \Mkp}\sum _{n \in \Mkp} \etkpmn\mathbb {E}\{\qG^{H}_{mk}{\qQ}_{mk'} \times\nonumber \\
&\qQ^{H}_{nk'}\qG_{nk}\} -\Pb\sum _{m \in \Mk}\sum _{n \in \Mk}\etkmn  \mathop {\mathrm {\mathbb {E}}}\nolimits  \{{ \qG^{H}_{mk} } \qQ_{mk}\} \times\nonumber\\
&\mathop {\mathrm {\mathbb {E}}}\nolimits \{\qQ^{H}_{nk}{ \qG_{nk} }\}+ \Sn\qI_{N}.
\end{align}
Each term of the sum in both $\boldsymbol {\Omega}_k  ( {\tieta })$ and $\boldsymbol {\Xi}_k  ( {\tieta })$ in~\eqref{eq:omega} and~\eqref{eq:si}, respectively, is concave. In addition,  the summation preserves concavity and the function $\log_2|(\cdot)|$
is matrix-increasing. 
As a result,  $g(\boldsymbol {\Omega}_k  (\boldsymbol {\tieta }))$ and $g(\boldsymbol {\Xi}_k  (\boldsymbol {\tieta }))$ in~\eqref{eq:R_PA} are concave functions.
Problem~\eqref{eq:problem_PA1} is equivalent to
\begin{subequations}\label{eq:problem_PA2}
\begin{align}
&\min _{\boldsymbol {\eta,\tieta}}\quad -\sum_{k=1}^K  {R}_{k}  ( {\tieta }) \label{eq:problem_PA2:O1}\\
&\text {st.}~\eta _{k,\cm}\geq 0,\qquad  \forall m, k, \label{eq:problem_PA2:c1}  \\
&\hphantom {\text {st.}}   {R}_{k}  (\tieta)  \leq  \log_2 (\Morder), \qquad  \forall k,  \label{eq:problem_PA2:c2}\\
&\hphantom {\text {st.}} \sum\nolimits_{k \in \Km}\eta_{k,\cm}\text {tr}\left( \mathrm {\mathbb {E}}\{ \qQ_{mk}\qQ_{mk}^H\} \right)\leq 1, \qquad  \forall m,  \label{eq:problem_PA2:c3}\\
&\hphantom {\text {st.}}\etkpmn^2=\eta_{k',\cm}\eta_{k', \cn}, \quad  \forall k',  m,  n. \label{eq:problem_PA2:c4}
\end{align}
\end{subequations}
We  replace constraint~\eqref{eq:problem_PA2:c4} by
\begin{align}\label{eq:etkpmn1}
\etkpmn^2 \leq \eta_{k',\cm}\eta_{k', \cn},
 \end{align}
\begin{align}\label{eq:etkpmn2}
\etkpmn^2 \geq \eta_{k',\cm}\eta_{k', \cn}.
 \end{align}
From~\eqref{eq:etkpmn1}, it is true that~\eqref{eq:etkpmn2} is equivalent to
\begin{align}\label{eq:S1}
    T(\boldsymbol {\eta}, \tieta) = \sum_{k'=1}^K \sum_{m\in \Mkp} \sum_{n\in\Mkp} (\eta_{k',\cm}\eta_{k', \cn}- \etkpmn^2) \leq 0.
 \end{align}
Accordingly, problem~\eqref{eq:problem_PA2} can be written in a more tractable form
as
\begin{subequations}\label{eq:problem_PA3}
\begin{align}
&\min _{\boldsymbol {\eta,\tieta}}\quad -\sum_{k=1}^K  {R}_{k}  ( {\tieta }) \label{eq:problem_PA3:O1}\\
&\text {st.}~\eta _{k,\cm}\geq 0,\qquad  \forall m, k, \label{eq:problem_PA3:c1}  \\
&\hphantom {\text {st.}}  {R}_{k}  (\tieta)  \leq  \log_2 (\Morder), \qquad  \forall k,  \label{eq:problem_PA3:c2}\\
&\hphantom {\text {st.}} \sum\nolimits_{k \in \Km}\eta_{k,\cm}\text {tr}\left( \mathrm {\mathbb {E}}\{ \qQ_{mk}\qQ_{mk}^H\} \right)\leq 1, \qquad  \forall m,  \label{eq:problem_PA3:c3}\\
&\hphantom {\text {st.}} \etkpmn^2 \leq \eta_{k',\cm}\eta_{k', \cn},  \qquad \forall k', m,  n,\label{eq:problem_PA3:c4}\\
&\hphantom {\text {st.}} T(\boldsymbol {\eta}, \tieta)\leq 0\label{eq:problem_PA3:c5}.
\end{align}
\end{subequations}
Now,  consider the following optimization problem
\begin{subequations}\label{eq:problem_PA4}
\begin{align}
&\min _{\boldsymbol {\eta,\tieta}}\quad \mathcal{L}  (\boldsymbol {\eta},{\tieta }) \label{eq:problem_PA4:O1}\\
&\text {st.}~\eta _{k,\cm}\geq 0,\qquad  \forall m, k, \label{eq:problem_PA4:c1}  \\
&\hphantom {\text {st.}}  {R}_{k}  (\tieta)  \leq  \log_2 (\Morder) \qquad  \forall k,  \label{eq:problem_PA4:c2}\\
&\hphantom {\text {st.}} \sum\nolimits_{k \in \Km}\eta_{k,\cm}\text {tr}\left( \mathrm {\mathbb {E}}\{ \qQ_{mk}\qQ_{mk}^H\} \right)\leq 1, \qquad  \forall m,  \label{eq:problem_PA4:c3}\\
&\hphantom {\text {st.}} \etkpmn^2 \leq \eta_{k',\cm}\eta_{k', \cn}, \qquad \forall k', m, n, \label{eq:problem_PA4:c4}
\end{align}
\end{subequations}
where $\mathcal{L}  (\boldsymbol {\eta},{\tieta })\triangleq-\sum_{k=1}^K  {R}_{k}  ( {\tieta })+ \lambda T(\boldsymbol {\eta},{\tieta })$ is the Lagrangian of \eqref{eq:problem_PA3} and $\lambda$ is the Lagrangian multiplier corresponding to constraint \eqref{eq:problem_PA3:c5}.
%
\begin{proposition}
\label{proposition-dual}
The value $T_{\lambda}$ of $T$ at the solution of \eqref{eq:problem_PA4} corresponding to $\lambda$ converges to $0$ as $\lambda \rightarrow +\infty$. Moreover, problem \eqref{eq:problem_PA3} has strong duality, such that
\vspace{-0.5em}
\begin{equation}\label{Strong:Dualitly:hold}
\underset{\{\boldsymbol{\eta},\tieta\}\in\mathcal{F}}{\min}\,\,
-\sum_{k=1}^K  {R}_{k}  ( {\tieta })
=
\underset{\lambda\geq0}{\sup}\,\,
\underset{\{\boldsymbol{\eta},\tieta\}\in\widetilde{\mathcal{F}}}{\min}\,\,
\mathcal{L}  ({\boldsymbol{\eta},\tieta}),
\end{equation}
where ${\mathcal{F}}$ and $\widetilde{\mathcal{F}}$  are  feasible sets defined as  $\mathcal{F}\triangleq \{\eqref{eq:problem_PA3:c1}, \eqref{eq:problem_PA3:c2}, \eqref{eq:problem_PA3:c3}, \eqref{eq:problem_PA3:c4}, \eqref{eq:problem_PA3:c5}\}$  and $\widetilde{\mathcal{F}}\triangleq \mathcal{F} \setminus {\eqref{eq:problem_PA3:c5}}$, respectively.
Therefore,  at the optimum solution $\lambda^* \geq0$ of~\eqref{Strong:Dualitly:hold}, problems \eqref{eq:problem_PA3} and \eqref{eq:problem_PA4} are equivalent.
\end{proposition}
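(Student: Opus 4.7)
The plan is to prove the proposition in two stages via a penalty-method argument that exploits the sign structure of $T$. First I would observe that constraint \eqref{eq:problem_PA4:c4} forces every summand in $T(\boldsymbol{\eta},\tieta)$ to be non-negative, so $T\geq 0$ uniformly on $\widetilde{\mathcal{F}}$, turning $\lambda T$ into a non-negative penalty. Letting $(\boldsymbol{\eta}^{(\lambda)},\tieta^{(\lambda)})$ denote a minimizer of \eqref{eq:problem_PA4} and $T_\lambda=T(\boldsymbol{\eta}^{(\lambda)},\tieta^{(\lambda)})$, the first task is to show $T_\lambda\to 0$.

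To that end I would establish monotonicity of $T_\lambda$ in $\lambda$ by adding the two optimality inequalities at any $\lambda_1>\lambda_2\geq 0$: the rate terms cancel and one obtains $(\lambda_1-\lambda_2)(T_{\lambda_1}-T_{\lambda_2})\leq 0$, so $T_\lambda$ is non-increasing and the limit $T^\star\geq 0$ exists. To pin $T^\star$ to $0$, I would take any feasible point $(\boldsymbol{\eta}^f,\tieta^f)\in\mathcal{F}$ (which has $T=0$) as a competitor in \eqref{eq:problem_PA4}, yielding
\begin{equation*}
\lambda T_\lambda \leq \sum_{k=1}^K R_k^{(\lambda)} - \sum_{k=1}^K R_k^f \leq C,
\end{equation*}
where $C$ is uniform in $\lambda$ because each $R_k$ is capped above by $\log_2(\Morder)$ through \eqref{eq:problem_PA4:c2} and bounded below by $0$; dividing by $\lambda$ and letting $\lambda\to\infty$ gives $T_\lambda\to 0$.

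For strong duality I would argue via compactness and continuity. The per-AP power constraint \eqref{eq:problem_PA4:c3} bounds $\boldsymbol{\eta}$, which via \eqref{eq:problem_PA4:c4} bounds $\tieta$ as well, so $\widetilde{\mathcal{F}}$ is compact and $\{(\boldsymbol{\eta}^{(\lambda)},\tieta^{(\lambda)})\}$ admits a subsequential limit $(\boldsymbol{\eta}^\star,\tieta^\star)$ along some $\lambda_n\to\infty$; continuity of $T$ together with $T_{\lambda_n}\to 0$ forces $(\boldsymbol{\eta}^\star,\tieta^\star)\in\mathcal{F}$. Since $\lambda T\geq 0$ on $\widetilde{\mathcal{F}}$, the inner minimum $\min_{\widetilde{\mathcal{F}}}\mathcal{L}(\cdot;\lambda)$ is non-decreasing in $\lambda$; combining this with the weak-duality inequality $\min_{\widetilde{\mathcal{F}}}\mathcal{L}(\cdot;\lambda)\leq \min_{\mathcal{F}}\bigl(-\sum_k R_k\bigr)$ and passing to the limit yields
\begin{equation*}
\lim_n \min_{\widetilde{\mathcal{F}}}\mathcal{L}(\cdot;\lambda_n)\;\geq\; -\sum_k R_k(\tieta^\star)\;\geq\; \min_\mathcal{F}\Bigl(-\sum_k R_k\Bigr),
\end{equation*}
which collapses the duality gap and establishes \eqref{Strong:Dualitly:hold}. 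Because the inner minimum is monotone, the supremum in \eqref{Strong:Dualitly:hold} is attained in the limit $\lambda^\star\to\infty$, and any accumulation point $(\boldsymbol{\eta}^\star,\tieta^\star)$ along that limit is feasible for \eqref{eq:problem_PA3} and achieves its optimal value, which is the asserted equivalence.

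The main obstacle is that the problem is non-convex: the reformulated objective is a difference of two concave functions and the bilinear identities \eqref{eq:problem_PA2:c4} render the original feasible set non-convex, so classical Slater-type strong-duality theorems do not apply. The argument circumvents this by exploiting the one-sided sign of $T$ on $\widetilde{\mathcal{F}}$ together with the existence of at least one feasible point of $\mathcal{F}$. The most delicate step will be the uniform bound $\lambda T_\lambda\leq C$, which relies critically on the rate cap \eqref{eq:problem_PA4:c2} (or, if that cap is inactive, on the uniform boundedness of $\sum_k R_k$ inherited from the per-AP power constraint \eqref{eq:problem_PA4:c3}); without such a bound one would only be able to conclude $T_\lambda\to T^\star$, leaving $T^\star$ potentially positive and breaking the subsequent duality argument.
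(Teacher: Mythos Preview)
Your argument is correct and is the standard penalty-method proof for this type of statement. The paper itself does not give any details: its entire proof of Proposition~\ref{proposition-dual} is the single sentence ``The proof follows \cite{vu18TCOM}.'' Your write-up therefore supplies precisely the argument that the cited reference is meant to contain---the sign observation $T\geq 0$ on $\widetilde{\mathcal{F}}$, the monotonicity of $T_\lambda$, the uniform bound $\lambda T_\lambda\leq C$ via the rate cap, and the compactness/limit argument for strong duality---so there is nothing to contrast methodologically.

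One small remark: the proposition is phrased as if a finite maximizer $\lambda^\star$ exists, whereas your argument (correctly) shows the supremum is realized only as $\lambda\to\infty$. This is a wording issue in the paper rather than a flaw in your proof; the paper's subsequent paragraph in fact acknowledges the same point by saying one takes ``a sufficiently large value of $\lambda$'' to drive $T_\lambda\leq\varepsilon$.
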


\begin{proof}
    The proof follows \cite{vu18TCOM}.
\end{proof}
Attaining the optimal solution for problem \eqref{eq:problem_PA3} necessitates $T_{\lambda}$ to be zero, which occurs   when $\lambda\to+\infty$, as indicated by Proposition~\ref{proposition-dual}. However, for practice implementation, it is sufficient to consider $T_{\lambda}\leq\varepsilon$, for some small value of $\varepsilon$ with a sufficiently large value of $\lambda$~\cite{vu18TCOM,Mohammadali:JSAC:2023}.

We note that the objective function~\eqref{eq:problem_PA4} is still non-convex and difficult to handle. Therefore, we substitute
it with a tractable upper bound.
Recall the fact that the convex function $-g(\cdot)=-\log_2|(\cdot)|$ is  lower
bounded by its first-order Taylor expansion at any local points~\cite{Goldsmith:2017:JSP}. Accordingly, we can derive a concave lower bound for $R_{k}(\boldsymbol {\tieta })$ in objective function~\eqref{eq:problem_PA4} as
\vspace{-0.5em}
\begin{align}\label{eq:R:low_PA}
R_{k}  (\boldsymbol {\tieta })=& g\left(\boldsymbol {\Omega}_{k}  ( {\tieta })\right)-g(\boldsymbol {\Xi}_{k}  (\boldsymbol {\tieta }))\nonumber\\
\geq& g\left(\boldsymbol {\Omega}_{k}  (\boldsymbol {\tieta })\right)-g\left(\boldsymbol {\Xi}_{k}^{ (0)}\right)\nonumber\\
&\hspace{-1em}
-\mathrm{tr}\left(\left(\boldsymbol {\Xi}_{k}^{ (0)}\right)^{-1}\left(\boldsymbol {\Xi}_{k}  ( {\tieta })-\boldsymbol {\Xi}_{k}^{ (0)}\right)\right) \triangleq\widetilde {R}_{k}^{ \mathrm{L}}( {\tieta }),
 \end{align}
where $\boldsymbol {\Xi}_{k}^{ (0)}=\boldsymbol {\Xi}_{k}  ( {\tieta^{(0)}})$.
Moreover, the
convex upper bounds of ${T}$ can be written as
\vspace{-0.5em}
\begin{align}\label{eq:T:low_PA}
    &\widetilde{T}^{\mathrm{U}}(\boldsymbol {\eta}, \tieta) \triangleq \!\!\! \sum_{k'=1}^K \sum_{m\in\Mkp} \sum_{n \in \Mkp} 0.25 \big[(\eta_{k',\cm} \!+\! \eta_{k', \cn})^2  \nonumber\\
&\hspace{1em} - 2(\eta_{k',\cm}^{(0)} \!-\!\eta_{k', \cn}^{(0)})(\eta_{k',\cm} \!-\! \eta_{k', \cn}) + \big(\eta_{k',\cm}^{(0)} \!-\! \eta_{k', \cn}^{(0)}\big)^2 \nonumber\\
&\hspace{1em}-8\etkpmn^{(0)}\etkpmn+4\big(\etkpmn^{(0)}\big)^2 \big],
\end{align}
where we have used the fact that for given  $a\geq0$ and $b\geq0$ 
\vspace{-0.5em}
\begin{align}
ab \leq& 0.25[(a+b)^2\nonumber\\
&~ -2(a^{(0)}-b^{(0)})(a-b) + (a^{(0)}-b^{(0)})^2].
\end{align}
Moreover, the constraints~\eqref{eq:problem_PA4:c2} and~\eqref{eq:problem_PA4:c4} are still non-convex and needs to be approximated by a convex upper bound. Following the fact that the first-order Taylor approximation
is a global upper estimator of a concave function, we have
\begin{align}\label{eq:R:up_PA}
R_{k}  ( {\tieta })\leq& g\left(\boldsymbol {\Omega}_k^{ (0)}\right)
+\mathrm{tr}\left(\left(\boldsymbol {\Omega}_k^{ (0)}\right)^{-1}\left(\boldsymbol {\Omega}_k  ( {\tieta })-\boldsymbol {\Omega}_k^{ (0)}\right)\right)\nonumber\\
&-g(\boldsymbol {\Xi}_k  ( {\tieta }))\triangleq\widetilde {R}_{k}^{ \mathrm{U}}( {\tieta }),
 \end{align}
where $\boldsymbol {\Omega}_k^{ (0)}=\boldsymbol {\Omega}_k  (\boldsymbol {\tieta^{(0)}})$.
In addition, since $\forall a\geq0$ and $b\geq0$ we have\cite{Tung:JWCOM:2020},
\begin{align}
-ab \leq &0.25 [(a\!-\!b)^2\nonumber\\
&-2(a^{(n)}\!+\!b^{(n)})(a\!+\!b)
+ (a^{(n)}+b^{(n)})^2],   
\end{align}
the convex upper bounds of~\eqref{eq:problem_PA4:c4} can be expressed as
\begin{align}
&\etkpmn^2+0.25 [(\eta_{k',\cm}\!-\!\eta_{k', \cn})^2\!-\!2(\eta_{k',\cm}^{(n)}\!+\!\eta_{k', \cn}^{(n)})\times\nonumber\\
&~~(\eta_{k',\cm}\!+\!\eta_{k', \cn})
+ (\eta_{k',\cm}^{(n)}+\eta_{k', \cn}^{(n)})^2]\leq 0.
\end{align}
We are now ready to handle the Problem~\eqref{eq:problem_PA3} by the sequential optimization framework. More specifically, at iteration $(\iota+1)$, for the given points $\boldsymbol {\eta}^{(\iota)}, \tieta^{(\iota)}$, problem \eqref{eq:problem_PA4} can finally be approximated by
\begin{algorithm}[t]\label{Alg:PA}
\caption{Power Allocation Optimization Given User Association}
\begin{algorithmic}[1]
\State Set $\iota=0$ and choose any feasible ${\boldsymbol {\eta}^{(\iota)}}=\boldsymbol {\eta }^{(0)}$ and $\tieta^{(\iota)}=\tieta^{(0)}$.
\Repeat
    \State Perform power allocation optimization~\eqref{eq:problem_PAn}: ${\boldsymbol {\eta}^{(\iota+1)}}$= \emph{Optimize}$({\boldsymbol {\eta}^{(\iota)}})$.
    \State Increase the super-iteration index: $\iota\leftarrow \iota+1$.
\Until{convergence}
\end{algorithmic}
\end{algorithm}
\begin{subequations}\label{eq:problem_PAn}
\begin{align}
&\min _{\boldsymbol {\eta,\tieta}}\quad \widehat{\mathcal{L}}  (\boldsymbol{\eta}, \tieta) \label{eq:problem_PAn:O1}\\
&\text {st.}~\eta _{k,\cm}\geq 0,\qquad  \forall m, k, \label{eq:problem_PAn:c1}  \\
&\hphantom {\text {st.}}  \widetilde {R}_{k}^{\mathrm{U}}(\tieta)  \leq \log_2 (\Morder), \qquad  \forall k,  \label{eq:problem_PAn:c2}\\
&\hphantom {\text {st.}} \sum\nolimits_{k \in \Km}\eta_{k,\cm}\text {tr}\left( \mathrm {\mathbb {E}}\{ \qQ_{mk}\qQ_{mk}^H\} \right)\leq 1, \qquad  \forall m,  \label{eq:problem_PAn:c3}\\
&\hphantom {\text {st.}} \etkpmn^2+0.25 [(\eta_{k',\cm}\!-\!\eta_{k', \cn})^2\!-\!2(\eta_{k',\cm}^{(n)}\!+\!\eta_{k', \cn}^{(n)})\times\nonumber\\
& (\eta_{k',\cm}\!+\!\eta_{k', \cn})+ (\eta_{k',\cm}^{(n)}+\eta_{k', \cn}^{(n)})^2]\leq 0, \quad, \forall k', m,   n, \label{eq:problem_PAn:c4}
\end{align}
\end{subequations}
where
\begin{align}
&\widehat{\mathcal{L}}  (\boldsymbol{\eta}, \tieta)= -\sum_{k=1}^K  \widetilde {R}_{k}^{ \mathrm{L}}( {\tieta })+\lambda\sum_{k'=1}^K \sum_{m\in\Mk} \sum_{n\in\Mk} 0.25\times\nonumber\\
&\big[(\eta_{k',\cm} \!+\! \eta_{k', \cn})^2 - 2\big(\eta_{k',\cm}^{(0)} \!-\!\eta_{k', \cn}^{(0)}\big)\big(\eta_{k',\cm} \!-\! \eta_{k', \cn}\big) +\nonumber\\
&\big(\eta_{k',\cm}^{(0)} \!-\! \eta_{k', \cn}^{(0)}\big)^2 -8\etkpmn^{(0)}\etkpmn+4\big(\etkpmn^{(0)}\big)^2\big].
\end{align}


The objective function in Problem~\eqref{eq:problem_PAn} is convex and the constraints are also affine, and hence  can be solved using standard convex optimization theory. The resulting  sequential optimization
algorithm for solving~\eqref{eq:problem_PAn} is formulated  in {Algorithm 1}\footnote{ Although this work presents results for an i.i.d. fading channel model, the proposed optimization framework in Algorithm 1 is general and can be extended to accommodate more realistic channel models as required.}. Choosing any feasible points $\{\boldsymbol{\eta}, \tieta\}\in\widehat{\mathcal{F}}$  we solve power allocation problem \eqref{eq:problem_PAn} to obtain its optimal solution  $\boldsymbol{\eta}^*, \tieta^*$ which then be used as the initial points in the next iteration. The process being iterated until an accuracy level of $\varepsilon$ is attained. Using~\cite[Proposition 2]{vu18TCOM} it can be readily showed that {Algorithm 1} will converge to a stationary point, i.e., a Fritz John solution, of problem \eqref{eq:problem_PAn} (hence \eqref{eq:problem_PA1}).
%
\subsection{User Association Given Power Allocation}
Thus far in this section, we  dealt with the power allocation design under fronthaul and per-AP transmit power constraint for any given fixed user association.
We now propose a heuristic user association algorithm  for the given power allocation.
 The fronthaul constraint~\eqref{eq:problem2:C3}  poses an upper bound on the number of users that can associate with each AP $m$, $K_m$, by $\Kmax$ as
\begin{align}\label{eq:Kmax}
&\Ntx K_m \left(\alpha_1 \log_2 (M_{\mathrm{order}}) + \alpha_2\right)\leq \mathrm{FH}_{\mathrm{max}}  \nonumber\\
K_m &\leq \left\lfloor\frac{\mathrm{FH}_{\mathrm{max}}}{ \Ntx\left(\alpha_1 \log_2 (M_{\mathrm{order}}) + \alpha_2\right)}\right\rfloor \triangleq \Kmax,
\end{align}
where $\lfloor \cdot \rfloor$ is the floor function.
\begin{algorithm} [t] 
\caption{User Association Given Power Allocation}\label{alg:cap}
\begin{algorithmic}[1]
\State Let $\Km$ be the set of users associated with  AP $m$. Initialize $\Km=\emptyset, m=1,\cdots, M$, and set $\Kmax$ as maximum number of users that can associate with each AP based on~\eqref{eq:Kmax}.
\For {$m=1:M$}
\State Let,  $\beta_{m, m^{(1)}},
\beta_{m, m^{(2)}}, \cdots, \beta_{m, m^{(K)}}$, where $m^{(k)} \in\{1,\cdots,K\}$, be the ordered
channel gain (large-scale fading coefficient)  from AP $m$ to user $k$,  $k=1, \cdots, K$, in descending order $\beta_{m, m^{(1)}}\geq\beta_{m, m^{(2)}}\geq \cdots \geq \beta_{m, m^{(K)}}$.
\State AP $m$ is associated with $\Kmax \leq K$ users  corresponding to the $\Kmax$ largest large-scale fading coefficients. That is, $\Km$ is set as $\Km = \{ m^{(1)}, \dots, m^{(\Kmax)} \}$.
\EndFor
\end{algorithmic}
\end{algorithm}

In the proposed user association  scheme each AP sorts the channel gains in descending
order and independently selects $\Kmax$ users with the strongest channel gains. The proposed user association scheme is described in Algorithm~\ref{alg:cap}.


\begin{table*}[t]
	\centering
\caption{Fronthaul parameters for the simulation} 
\begin{tabular}{||c c c c ||}
 \hline
 Parameter & Value & Parameter & Value \\ [0.5ex]
 \hline\hline
 Number of OFDM subcarriers,  $N_{\mathrm{subcarrier}}$ & 3264  &    {$N_{\mathrm{Gran}}$}       &136  \\
 \hline
$\mathrm{eCPR_{eff}}$ & 0.85      &$\mathrm{delay_{\mathrm{precoder}}} (\mathrm{delay_{data}})$    & $2\times10^{-4}$($5\times10^{-4}$) s  \\
 \hline
 Number of OFDM symbols, $N_{\mathrm{OFDM}}$ & 14 sym &Number of quantization bits, $N_{\mathrm{bits}}$ & 16 \\
 \hline
\end{tabular}
\label{tab:FronthaulParameter}
\end{table*}
\subsection{Computational Complexity}
Here, we compare the complexity of the beamforming design and power allocation for the UC generalized ZF-based CF-mMIMO system. The details of the computational complexity calculation are presented in Appendix~\ref{app:Computational Complexity Analysis}, where complexity is quantified in terms of the number of real floating-point operations (flops).

\emph{1) Computational Complexity of Beamforming Design:}  The total flop count for the beamforming calculation of CB cluster $\Csm$ is
 $\varpi=NU_m+24\Nrx L^2 U_m C_m^2+ 56\Nrx^2 L U_m^2 C_m +
54\Nrx^3 U_m^3$. It is observed that the  computational complexity of the generalized  CF-mMIMO system at each AP $m$ depends on the number of APs in  its corresponding CB cluster, $C_m$, with $1\leq C_m \leq M$, and the cardinality of its associated user set $\Usm$, $U_m$, with $1\leq U_m \leq K$. The corresponding computational complexity of fully centralized  beamforming design (fully distributed beamforming design) can be obtained by replacing $C_m$ and $U_m$ in $\varpi$ with $M$ and $K$ ($1$ and $\Kmax$), respectively.

\emph{2) Computational Complexity of Power Allocation:} The computational complexity of  power allocation for the generalized CF-mMIMO system with $C$ CB is given by
$\mathcal{O}(\iota\sqrt{M +K + KC^2}(KC(2C+1) + M +K )(KC(C+1))^2)$. The CF-mMIMO system with fully distributed beamforming design exhibits the highest power allocation computation complexity. This is because its corresponding   problem involves computing up to $MK$ power allocation coefficients. 

We emphasize that the power allocation optimization in Algorithm~1 is scalable, with scalability ensured through careful consideration of computational complexity, adaptability to various system setups, and fronthaul constraints. Similarly, the proposed user association in Algorithm~2 is simple, low-complexity, and efficient. Both algorithms rely on statistical CSI, requiring only large-scale channel statistics, which change slowly over time and remain constant across frequencies. Unlike small-scale fading coefficients, which vary rapidly, large-scale fading evolves more gradually. Therefore, when the system remains static (with unchanged users) for extended periods, power allocation and user association only need to be performed once and can be applied to all subcarriers across multiple frame durations. This significantly reduces the need for frequent recalculations, while bemforming, based on instantaneous CSI, remains the dominant factor in computational complexity.   Among the considered beamforming designs, the distributed design has the lowest complexity in calculating beamforming matrices and the least fronthaul overhead for CSI exchange, while the centralized design has the highest computational complexity and fronthaul overhead.


\section {Numerical Results}\label{sec:Numerical Results}
We consider a CF-mMIMO system where the APs and
users are randomly distributed in a square of $1 \times 1$ km${}^2$,
whose edges are wrapped around to avoid the boundary effects. 
Each AP can serve up to $K_{\mathrm{max}}$ users out of a set of $K$ users. The value of parameter $K_{\mathrm{max}}$  depends on the system and fronthaul parameters and is determined based on~\eqref{eq:Kmax}.  


\subsection{Simulation Setup and Parameters}
 The large-scale fading coefficient $\beta_{mk}$ includes the path loss
and shadow fading, according to
\begin{equation} \label{eq:Beta}
\beta_{mk} =10^{\frac{\text{PL}_{mk}}{10}}  10^{\frac {\sigma _{sh} ~y_{mk}}{10}}, 
\end{equation}
where  the second term models the shadow fading  with standard deviation $\sigma_{sh} = 4$ dB and $y_{mk} \sim \mathcal{CN}(0, 1)$, while   $\text{PL}_{mk}$ (in dB) is calculated as 
\begin{align}\label{eq:PL}
&\hspace {-0.3pc}\text {PL}_{mk}=\! 
\begin{cases} \!-L-35\log_{10}(d_{mk}),\qquad \qquad \quad & d_{mk} > d_{1},\!\! \\ \!-\!L-\!15\!\log _{10}(d_{1})\!-\!\!20 \log_{10}(d_{mk}),~\!&\! d_{0}\! < \!d_{mk}\!\!\le \!\! d_{1},\!\!\\ \!-L-15 \log_{10}(d_{1})-20 \log _{10}(d_{0}),&d_{mk} \le d_{0}, 
\end{cases} 
\end{align}
with $L = 46.3 + 33.9 \log_{10}(f)-13.82 \log_{10}(h_{\mathrm{AP}} )-
(1.1 \log_{10}(f)- 0.7)h_{\mathrm{U}} + (1.56 \log_{10}(f)- 0.8)$. In addition, $f$
is the carrier frequency (in MHz),    $d_{mk}$  denotes the distance between AP $m$ and user $k$ (in m), $h_{\mathrm{AP}}$ and $h_{\mathrm{U}}$  are the antenna heights at the AP and at the user (in m), respectively.  Here, we choose $d_0 = 10$ m, $d_1 = 50$ m,  $h_{\text{AP}} = 15$ m and $h_{\text{U}} = 1.65$ m. These parameters resemble those in \cite{hien:2017:wcom}.   
The maximum transmit power for training  pilot sequences and each AP is $100$ mW, while the  the noise power is $\sigma^2_n=-92$ dBm.
In what follows,  unless otherwise stated, we choose the  fronthaul parameters summarized in  Table~\ref{tab:FronthaulParameter}, and set $M_{\mathrm{order}}=2^9$.  In addition,  we consider $100$ MHz bandwidth with $30$ kHz subcarrier spacing which corresponds to $N_{\mathrm{subcarrier}}=3264$  as in Table~\ref{tab:FronthaulParameter}.
Moreover, we select $\tau_\mathrm{c} = 2000$ samples, corresponding to a coherence bandwidth of
$200$ kHz and a coherence time of $10$ ms.

We compare the performance of the following cases:
\begin{itemize}
\item $C=1$:   In this case there is $1$ CB cluster consisting of all the APs in the network, which can be considered as a CF-mMIMO system with fully centralized CB. 

\item $C=2$: In this case, APs are divided into $C=2$ disjoint CB clusters.

\item $C=4$: In this case,  APs are divided into $C=4$ disjoint CB clusters.

\item $C=M$:  In this case, we have  $C=M$ CB   clusters with cardinality 1. This case corresponds to the CF-mMIMO system with fully distributed beamforming. 
\end{itemize}
For CB clustering, we consider a simple distributed CB architecture where APs are divided into $C$ disjoint CB clusters based on their geographical positions. Moreover, the notations  OPA  and EPA describe the performance achieved by the proposed  power allocation {Algorithm~1} and  low-complexity heuristic  power control~\cite{nayebi:2017:wcom}, as outlined as follows, respectively. With~\eqref{eq:DL-x1} and~\eqref{eq:P_const}, the per-AP power constraint for AP $m$ in CB cluster $\cm$  can be written as
$\sum_{k \in \Km}\eta_{k, \cm}\text {tr}\left( \mathrm {\mathbb {E}}\{ \qQ_{mk}\qQ_{mk}^H\} \right)\leq 1$. Accordingly, in EPA,  the power coefficient used by AP $m$ for transmission to user $k$ is calculated as
 \begin{equation}~\label{EPA:Benchmark}
\eta_{k, \cm}=\frac{1}{\max_m(\sum_{k \in \Km}\delta_{mk})},~~~~  \forall k \in \Km , m \in \cm,
\end{equation}
where $\delta_{mk}\triangleq\text {tr}\left(\mathrm {\mathbb {E}}\{ \qQ_{mk}\qQ_{mk}^H\} \right)$.


\begin{figure}[t]
\centering
  \begin{subfigure}[t]{0.48\textwidth}
    \includegraphics[width=\textwidth]{  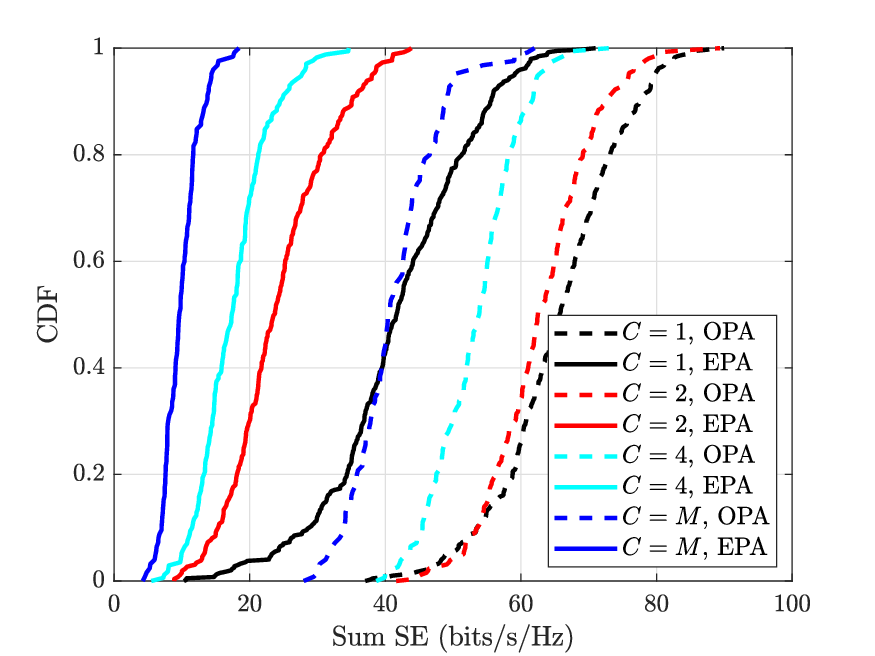}
    \caption{CDF of the sum SE}
    \label{Fig3a}
  \end{subfigure}
  \begin{subfigure}[t]{0.48\textwidth}
    \includegraphics[width=\textwidth]{  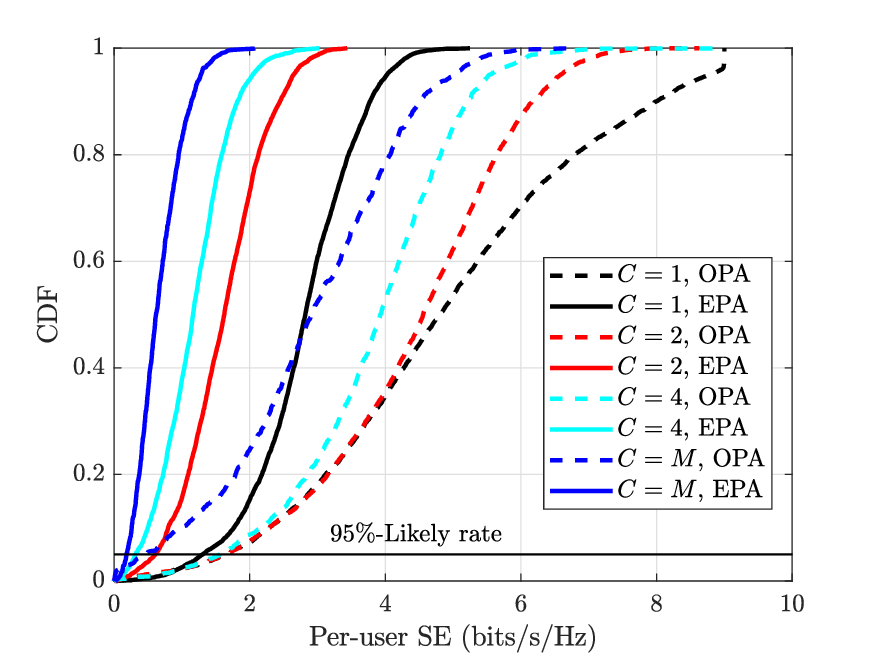}
    \caption{CDF of the per-user SE}
    \label{Fig3b}
  \end{subfigure}
  \caption{Comparison between the sum SE and per-user SE achieved by the proposed power allocation and user association and baseline schemes for different number of CB clusters.  Here, $M=10, N=2,\bar{N}=1, \mathrm{FH}_{\mathrm{max}}=14$ Gbps, and $L=22$. }
  	\vspace{0.6em}
\label{fig:Fig3}
\end{figure}

\begin{figure}[t]
	\centering
	\includegraphics[width=0.48\textwidth]{  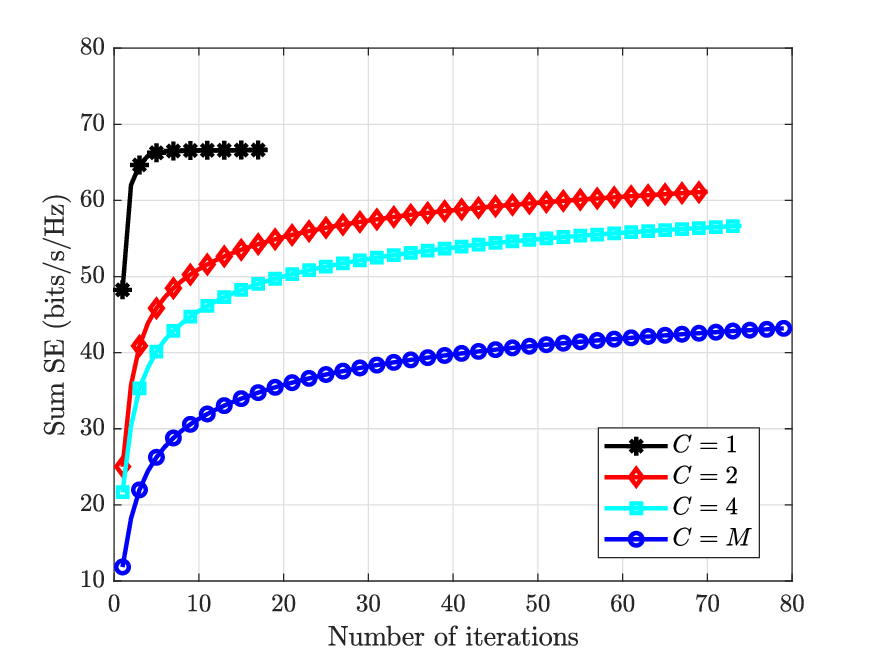}
	\vspace{-0.6em}
	\caption{Convergence of the proposed Algorithm 1, where $M=10$, $K=15$,   $N=2$, $\bar{N}=1$, $L=22$, and $\mathrm{FH}_{\mathrm{max}}=14$ Gbps.}
	\label{fig:Fig10}
    	\vspace{0.6em}
\end{figure}
\subsection{Results and Discussions}
 \emph{1)  Power Allocation and User Association:} In Fig.~\ref{fig:Fig3} we evaluate the performance of
the proposed power allocation and user association in the fronthaul-aware downlink  CF-mMIMO system. 
 Numerical results lead to the following conclusions.
\begin{itemize}
\item The proposed  power allocation and user association solution resulting from the optimization Algorithm 1, yields significant
sum-SE and per-user SE performance increase against EPA for different cases.  More specifically, it
provides sum-SE performance gains of up to $59\%, 173\%, 252\%$, and $312\%$, 
over EPA for $C=1$, $C=2$, $C=4$, and $C=M$ cases, respectively, which highlights the advantage of our power optimization solution
over heuristic ones. 
\item   It is observed that the performance gain is minimal when $C=1$, whereas it is maximal when $C=M$. The  reason is twofold: 1) primarily due to the fact that when $C=1$, the fully centralized CB handles all interference cancellation, leaving little room for further improvement in sum-SE performance. However, for larger values of $C$, the residual interference can be effectively mitigated through a carefully designed power allocation scheme, leading to significant performance enhancements when power allocation is optimized; 2) secondarily because when power
allocation is optimized,  increasing the number of CB clusters  provides a
greater flexibility. In fact, in the system with $C=1$ CB cluster, we optimize up to $K$ power allocation coefficients. However, in the system with $C$ CB clusters, we can optimize the  sum SE over $CK$ transmit powers. Therefore, the solution space is expanding by increasing the number of   CB clusters.


\item The sum-SE  performance gap between $C=M$  and $C=1$,  decreases with OPA. For example,  the sum-SE performance loss of $C=M$ case with respect to $C=1$ case is $75\%$ and $38 \%$ under EPA and OPA strategies, respectively.
\item  From the $95\%$-likely SE performance perspective, the CF-mMIMO system relying on OPA provides nearly identical performance for the cases of $C=1$, $C=2$, and $C=4$, which is three times higher than that of the $C=M$ case. This result highlights the efficiency of OPA in maintaining consistent SE performance across varying numbers of CB clusters. 
\end{itemize}
 
Figure~\ref{fig:Fig10} illustrates the convergence behaviour of the proposed approach for different numbers of CB clusters. The algorithm is computationally efficient, with the number of iterations being minimal when $C = 1$ and maximal when $C = M$. This is because, for $C = 1$, there are $K$ power allocation variables, while for $C$ CB clusters, there are $CK$ transmit power variables.

\begin{figure}[t]
	\centering
	\includegraphics[width=0.48\textwidth]{  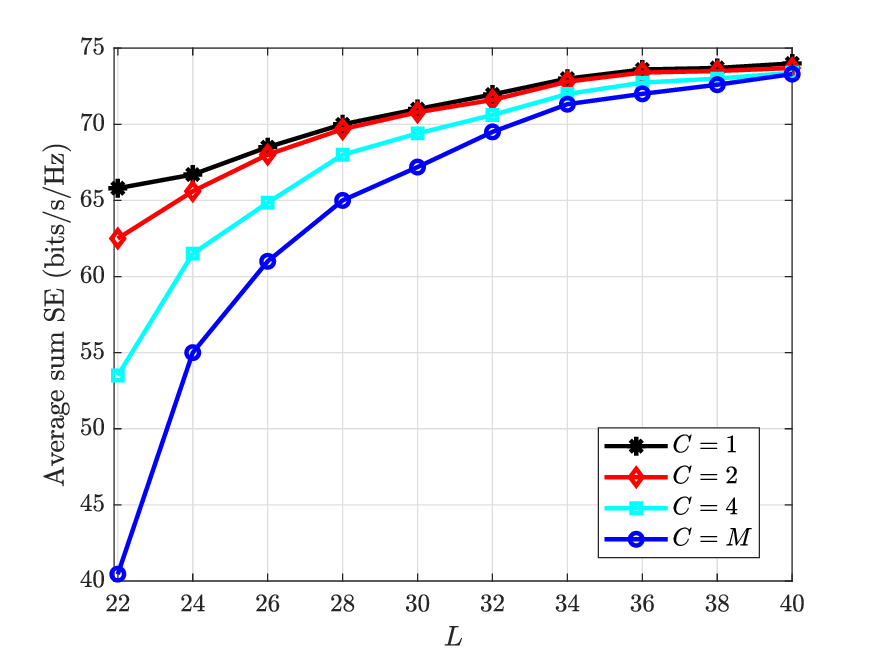}
	\vspace{-1em}
	\caption{Average sum SE  versus number of antennas per AP, $L$, where $M=10$, $K=15$, $\mathrm{FH}_{\mathrm{max}}=14$ Gbps,  $N=2$, and $\bar{N}=1$.}
	\label{fig:Fig4}
    	\vspace{-1.6em}
\end{figure}
\begin{figure}[t]
	\centering
	\includegraphics[width=0.48\textwidth]{  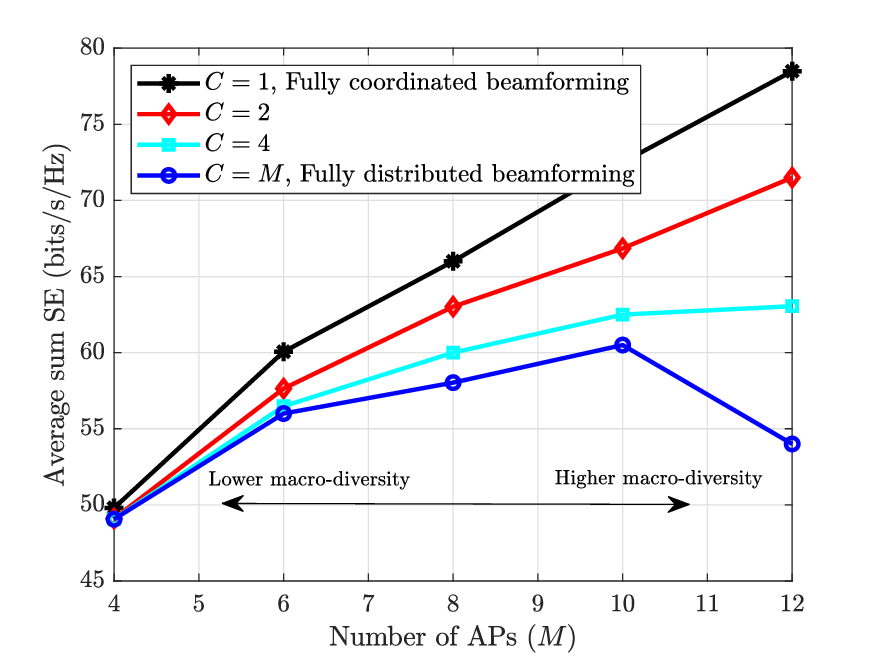}
	\vspace{-0.8em}
	\caption{Average sum SE  versus number of APs, $M$, where $ML=240$, $K=15$, $\mathrm{FH}_{\mathrm{max}}=14$ Gbps, and $N=\bar{N}=1$.}
	\label{fig:Fig5}
    	\vspace{0.6em}
\end{figure}


\emph{2) Effect of the Number of Antennas Per AP:} Figure~\ref{fig:Fig4} shows the average sum-SE performance achieved by CF-mMIMO system for different number of transmit antennas at the AP. Increasing $L$ has two effects on the
sum-SE performance, namely, (i) increasing the
diversity and array gain, and (ii) decreasing $K_{\mathrm{max}}$ based on fronthaul constraint~\eqref{eq:Kmax}. The
former effect becomes dominant under distributed scheme with $C=M$, which leads to a significant improvement in the SE performance. However, the latter effect becomes dominant under centralized CB scheme with $C=1$. As a result, the performance gap between  centralized and distributed scheme
decreases with increasing $L$. Later, we will investigate the performance of the CF-mMIMO system under different fronthaul constraint values, as shown in Fig.~\ref{fig:Fig6}.  Simulation results in Fig.~\ref{fig:Fig4} also
indicate that: 1) for high number of transmit antennas, CF-mMIMO with distributed beamforming relying on the proposed optimization solution has only negligible performance penalty, despite avoiding the high
computational complexity for centralized CB and heavy  overhead used for acquisition of CSI among large CB coordination sets. This result highlights the advantage of our  optimization framework in reducing the need for beamforming coordination among a large number of APs;  2) a balance between performance, complexity, and scalability can be achieved by selecting an appropriate number of CB clusters, $C$, enabling partial cooperation to enhance performance while maintaining manageable complexity and fronthaul demands.

\emph{3) Effect of the Number of APs:} 
Figures~\ref{fig:Fig5} depicts the average  sum SE as a function of the number of APs for systems having the same total numbers of service antennas, i.e., $L M = 240$, but different number of APs. From this figure, we have
the following observations
\begin{itemize}
    \item Under fully coordinated beamforming and/or relatively large cooperation sizes, distributing antennas trivially achieves better performance thanks to added macro diversity on top of high-levels of interference cancellation due to CB design.

    \item On the other hand, for $C=M$ (fully distributed beamforming), a non-trivial trade-off between (per AP) beamforming capability and macro diversity should be achieved; i.e., further distributing antennas without beamforming coordination is detrimental.
    
\end{itemize}

\begin{figure}[t]
	\centering
	\includegraphics[width=0.48\textwidth]{  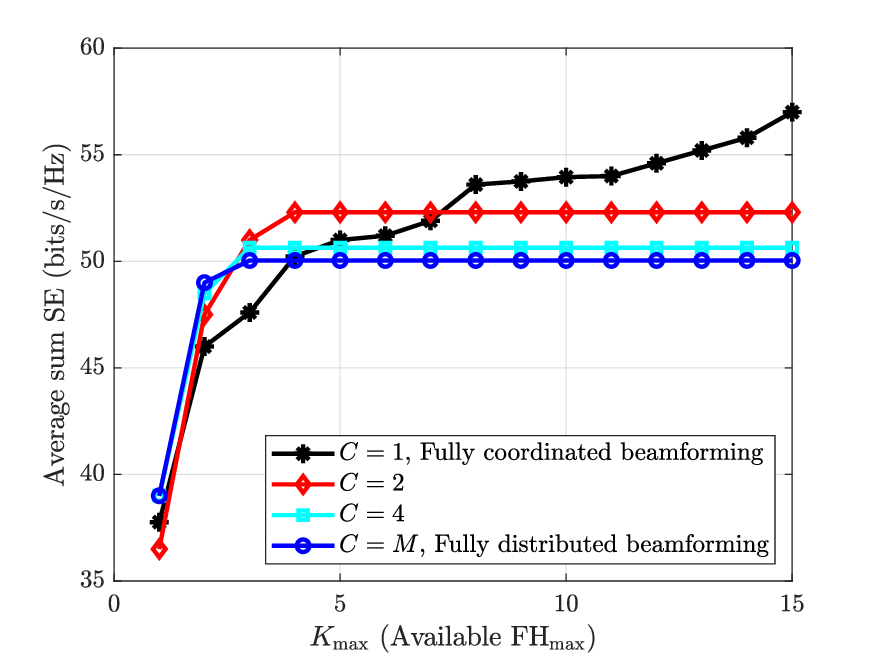}
	\vspace{-0.6em}
	\caption{Impact of the $\mathrm{FH}_{\mathrm{max}}$ on the sum SE, where $M=10$, $K=15$,  $N=2$, $\bar{N}=1$, and $L=12$.}
	\label{fig:Fig6}
    	\vspace{.8em}
\end{figure}

\emph{4) Effect of the Maximum Fronthaul Capacity:} In Fig.~\ref{fig:Fig6} we  investigate the effect of the maximum available fronthaul capacity, $\mathrm{FH}_{\mathrm{max}}$,  on the performance of the  CF-mMIMO system with the  proposed  optimization scheme under the fronthaul constraint.  Different values of $\mathrm{FH}_{\mathrm{max}}$, correspond to different values of $\Kmax$, which are calculated based on~\eqref{eq:Kmax}.  Upon increasing $\mathrm{FH}_{\mathrm{max}}$ the upper bound on the maximum number of users that each AP can serve, i.e., $\Kmax$, increases. We note that in this figure, for each value of $\Kmax$, we change the  number of   assigned users to the APs, $K_m$, from $1$ to $\Kmax$, and then choose  the value that results in the maximum sum SE as
\vspace{-0.7em}
\begin{align}
    K_m^*=\argmax_{K_m \in [1,\cdots,\Kmax]} \sum_{k \in \mathcal{K}}R_k(K_m).
\end{align}
From this figure, we have
the following observations.
\begin{itemize}
        \item The performance of CF-mMIMO with centralized CB notably increases with increasing $\Kmax$. The main reason for this is the CB designs' capability to mitigate multi-user interference among users within the same CB cluster, which allows APs to select higher number of users to serve.
        \item For the considered CF-mMIMO system with distributed beamforming design $\Kmax=3$ is enough and increasing $\Kmax$ from $3$ to $15$ will not increase the  sum-SE performance.  This is due to the fact that distributed beamforming design exploits all degrees of freedom to mitigate its own interference to its assigned users, but not interference from other APs which is high for larger values of  $\Kmax$.
    \item  In regimes with higher values of $\Kmax$, which result from higher available $\mathrm{FH}_{\mathrm{max}}$, centralized CB design outperforms other cases, while increasing the number of CB clusters leads to a reduction in the sum-SE performance.     We also note that for the system with highly limited $\mathrm{FH}_{\mathrm{max}}$, distributed beamforming design  having the ability to provide a better performance/implementation complexity trade-off  is  undoubtedly a better choice.
    
        \item The relative performance gap between different cases changes when $\Kmax<8$. This change occurs because the CF-mMIMO system experiences varying levels of trade-offs between degrees of freedom for power optimization and inter-user interference cancellation for different numbers of CB clusters.

         \item Finally, our results indicate that under practical fronthaul limitations $\mathrm{FH}_{\mathrm{max}}$, the proposed power allocation combined with simple distributed CB architecture can even outperform a computationally-heavy fully centralized CB when $\Kmax \leq 4$.
\end{itemize}
\begin{figure}[t]
	\centering
	\includegraphics[width=0.48\textwidth]{  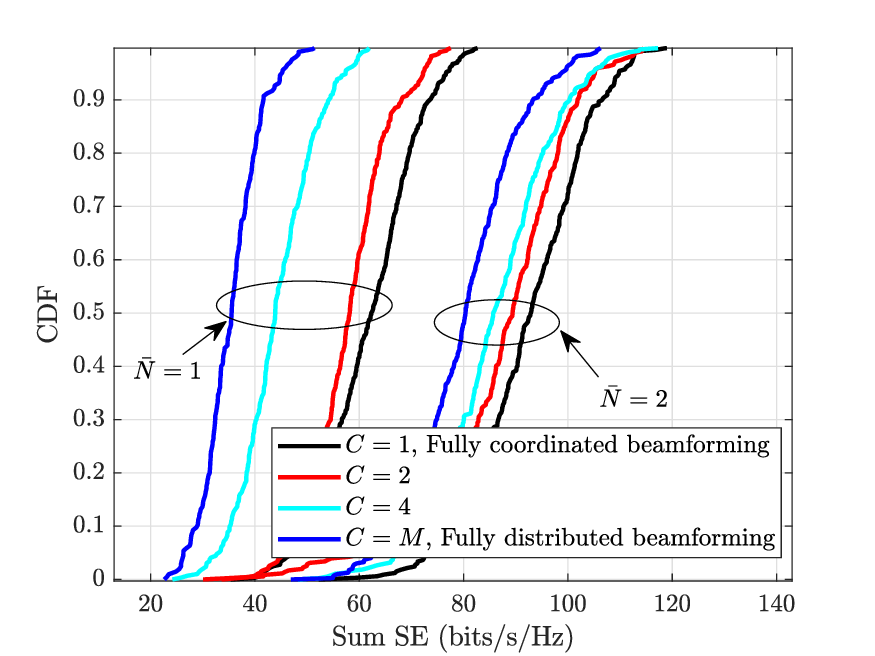}
	\vspace{-0.6em}
	\caption{CDF of the sum SE, where $M=10$, $K=15$,  $N=2$, $L=18$, and $\mathrm{FH}_{\mathrm{max}}=18$ Gbps.}
	\label{fig:Fig7}
        	\vspace{.8em}
\end{figure}


\begin{figure}[t]
	\centering
	\includegraphics[width=0.48\textwidth]{  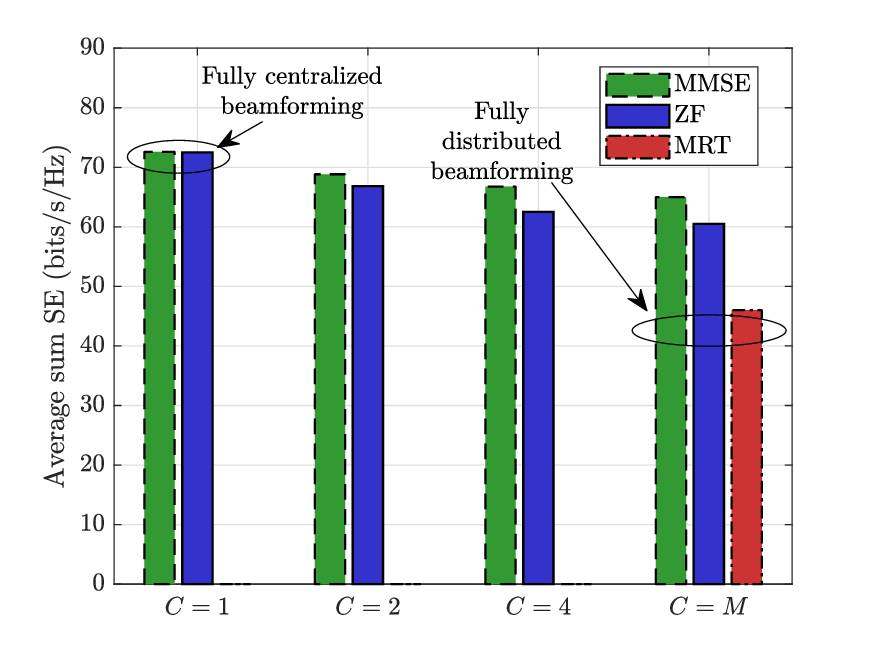}
	\vspace{-0.6em}
	\caption{Comparison between the sum SE achieved by different beamforming schemes, where $M=10$, $K=15$,  $N=\bar{N}=1$, $L=24$, and $\mathrm{FH}_{\mathrm{max}}=14$ Gbps.}
	\label{fig:Fig8}
    	\vspace{-0.6em}
\end{figure}
\begin{figure}[t]
	\centering
	\includegraphics[width=0.48\textwidth]{  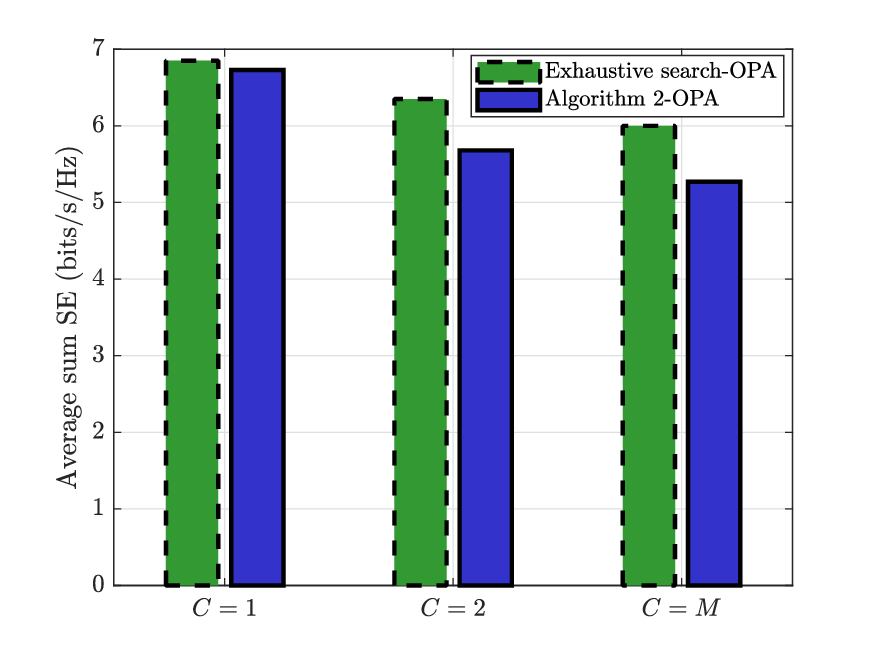}
	\vspace{-0.6em}
	\caption{Comparison between the heuristic user association Algorithm 2 and exhaustive search, where $M=4$, $K=3$,  $N=2$, $\bar{N}=1$, $L=8$, and $\mathrm{FH}_{\mathrm{max}}=3$ Gbps.}
	\label{fig:Fig9}
        	\vspace{.6em}
\end{figure}

\emph{5) Effect of the Number of  Downlink Data Streams:}
In Fig.~\ref{fig:Fig7}, we investigate the effect of number of downlink data stream $\Ntx$ on the sum-SE performance of the CF-mMIMO system for different number of CB clusters. We can see that using multiple data streams at the users greatly improves the sum SE  for all cases. This is an interesting observation, as increasing $\Ntx$ has two impacts on the CF-mMIMO system: 1) enhancing the multiplexing gain and 2) increasing the amount of data traffic to be fronthauled between BBHs and BBLs, consequently reducing $\Kmax$ based on the fronthaul constraint~\eqref{eq:Kmax}. The increase in multiplexing gain dominates the sum-SE performance.

\emph{5) Effect of the Beamforming Scheme:}
In Fig.~\ref{fig:Fig8}, we compare the performance of a CF-mMIMO system using ZF-based beamforming design and the MMSE scheme for different numbers of CB clusters. Moreover, we present results for the fully distributed MRT scheme. Simulation results indicate that the performance gap between ZF and MMSE is minimal, particularly for a smaller number of CB clusters. In the fully distributed case, MMSE and ZF beamforming schemes achieve sum-SE performance gains of $39\%$ and $30\%$ over the MRT scheme, respectively.  

Finally, in Fig.~\ref{fig:Fig9}, we benchmark the proposed heuristic user association algorithm (Algorithm 2) against an optimal approach using exhaustive search. A small system setup comprising 4 APs and 3 users is considered to enable the implementation of exhaustive search. The results demonstrate that the proposed heuristic approach achieves near-optimal performance in the fully centralized case with $C=1$, while significantly reducing computational complexity. As the number of CB clusters increases, the performance loss of the heuristic user association algorithm grows; for the fully distributed case, which corresponds to the highest number of CB clusters, the performance loss remains below  $14\%$.
\section{Conclusions}
In this paper, we have provided a comprehensive analytical framework for the fronthaul limited CF-mMIMO systems with multiple-antenna users and multiple-antenna APs relying on  both CB and UC clustering.
The proposed CF-mMIMO framework is very general and can cover different CF-mMIMO scenarios with different performance-overhead and complexity trade-offs. We proposed a large-scale fading-based  optimization approach  to maximize the sum SE   under a total transmit power constraint  and fronthaul constraint. Our optimized approach demonstrated significant sum-SE gains compared to the baseline, particularly for fully distributed CB designs. Specifically, it achieved sum-SE performance gains of up to $59\%, 173\%, 252\%$, and $312\%$,  over EPA for the cases of  $C=1$, $C=2$, $C=4$, and $C=M$, respectively.
Insights from the simulation results demonstrated that  the number of APs, the number of antennas, and the maximum fronthaul capacity are the key factors determining the extent to which decentralizing the beamforming design,  such as by reducing the number of CB clusters,  enhances the system performance. For example, if either the number of transmit/receive antennas is large or the available fronthaul is small, distributed beamforming design offers  a better performance/implementation complexity trade-off. Nevertheless, the sum-SE  performance improvement of the centralized CB design is more pronounced for  higher values of available $\mathrm{FH}_{\mathrm{max}}$. Future work will explore user mobility, Doppler effects, and hardware impairments in fronthaul-limited user-centric CF-mMIMO systems with CB. Moreover, more realistic channel models will be investigated to enhance practical applicability and refine performance predictions.

\appendices

\section{Computational Complexity Analysis}  \label{app:Computational Complexity Analysis}
The complexity is counted as the number of real flops where a real addition, multiplication, or division is counted as one flop. Also, a complex addition
and multiplication have two flops and six flops, respectively.

\emph{1) Computational Complexity of Different Beamforming Designs:}
The computational complexity of the proposed beamforming schemes is elaborated as follows.
Beamforming matrix calculation of a given AP $m$ in \eqref{eq:Q2} for the CB cluster $\Csm$ can be divided into two parts, i.e., the calculation of
SVD of channel matrix $\widehat {\qG}_{m}$ as $\widehat {\qG}_{m} = \qU_m \boldsymbol{\Sigma}_m \qV^H_m$ and that of $\qU_m \boldsymbol{\Sigma}^{-1}_m \qV^H_m$.

 \begin{itemize}
\item The flop count for SVD of complex-valued $(L C_m)\times (\Nrx  U_m)$ channel matrix $\widehat {\qG}_{m}$ in~\eqref{eq:G}  is
$24\Nrx L^2 U_m C_m^2+ 48\Nrx^2 L U_m^2 C_m +
54\Nrx^3 U_m^3$ by treating every operation as complex multiplication~\cite{Shen:2012:JSP}.
\item For the second part, calculating $\boldsymbol{\Sigma}^{-1}_m$ needs $NU_m$ real divisions and calculating $\qU_m \boldsymbol{\Sigma}^{-1}_m$ with given $\boldsymbol{\Sigma}^{-1}_m$ needs $2LNC_mU_m$  real multiplications. Then multiplying $\qU_m \boldsymbol{\Sigma}^{-1}_m$ with  $\qV^H_m$ needs  $8LN^2C_mU_m^2-2LNC_mU_m$ flops.
    \end{itemize}
Hence, the total flop count for the beamforming calculation  is
 $\varpi=NU_m+24\Nrx L^2 U_m C_m^2+ 56\Nrx^2 L U_m^2 C_m +
54\Nrx^3 U_m^3$.

\emph{2) Complexity of  Power Allocation:}
In each iteration of Algorithm 2, the computational complexity of solving problem~\eqref{eq:problem_PA2} is
$\mathcal{O}( \sqrt{n_{l1}+\\n_{q1}} (n_{v1} + n_{l1} + n_{q1})n_{v1}^2)$,  where $n_{v1}=KC(C+1)$ is the number of real-valued scalar decision variables with $C$ is the number of CB clusters, $n_{l1}=M +K$ denotes the number of linear constraints, and  $n_{q1}=KC^2$ is the number of quadratic constraints~\cite{Vincent:TWC:2017}. Therefore, the number of flops required by the
Algorithm 1 is $\mathcal{O}( \sqrt{M +K+ KC^2}(KC(2C+1) + M +K)(KC(C+1))^2)$.
\bibliographystyle{IEEEtran}
\bibliography{IEEEabrv,Ref_CellFreeFronthaul}

\balance

\end{document}